\documentclass[11pt]{article}

\usepackage{amsfonts}
\usepackage{amsmath}
\usepackage{amssymb}
\usepackage{amsthm}

\bibliographystyle{plain}

%to show all the labels used in the text:
%\usepackage{showkeys}

%graphics packages
\usepackage{graphicx}

%settings for the page
\pagestyle{plain} % empty, plain, headings, myheadings
\textwidth=170mm

\topmargin=0mm \textheight=215mm

\oddsidemargin=-0.5cm

\theoremstyle{plain}
\newtheorem{theorem}{Theorem}

\newtheorem{lemma}[theorem]{Lemma}
\newtheorem{corollary}[theorem]{Corollary}

\theoremstyle{definition}

\newtheorem{remark}[theorem]{Remark}

\numberwithin{equation}{section}

\newcommand{\V}[1]{{\mathbf{#1}}}
\newcommand{\pdt}{\partial_t}

\begin{document}

\title{Epidemiological models with parametric heterogeneity:\\ Deterministic theory for closed populations}

\author{Artem S. Novozhilov\footnote{anovozhilov@gmail.com}\\[3mm]
\textit{\normalsize $^{1}$Applied Mathematics--1, Moscow State University of Railway Engineering,}\\[-1mm]\textit{\normalsize Obraztsova 9, bldg. 9, Moscow 127994, Russia}}
\date{}

\maketitle

\begin{abstract}
We present a unified mathematical approach to epidemiological models with parametric heterogeneity, i.e., to the models that describe individuals in the population as having specific parameter (trait) values that vary from one individuals to another. This is a natural framework to model, e.g., heterogeneity in susceptibility or infectivity of individuals. We review, along with the necessary theory, the results obtained using the discussed approach. In particular, we formulate and analyze an SIR model with distributed susceptibility and infectivity, showing that the epidemiological models for closed populations are well suited to the suggested framework. A number of known results from the literature is derived, including the final epidemic size equation for an SIR model with distributed susceptibility. It is proved that the bottom up approach of the theory of heterogeneous populations with parametric heterogeneity allows to infer the population level description, which was previously used without a firm mechanistic basis; in particular, the power law transmission function is shown to be a consequence of the initial gamma distributed susceptibility and infectivity. We discuss how the general theory can be applied to the modeling goals to include the heterogeneous contact population structure and provide analysis of an SI model with heterogeneous contacts. We conclude with a number of open questions and promising directions, where the theory of heterogeneous populations can lead to important simplifications and generalizations.

\paragraph{\footnotesize Keywords:} SIR model, heterogeneous populations, distributed susceptibility, final epidemic size, heterogeneous contact structure, power law transmission function
\paragraph{\footnotesize AMS Subject Classification:} Primary:  34C20, 34G20, 92D30
\end{abstract}

\section{Introduction}\label{section:1}
The real-world populations are \textit{heterogeneous}. The populations consist of individuals, and all the individuals are different. This is a basic fact, which does not require any proof. Individuals can differ in their age, spatial location, social habits, genome compositions, etc. Inasmuch as we aim to model the dynamics of interacting populations, it is an obvious condition to include this heterogeneous structure into the mathematical models. There are different kinds of population heterogeneity; here we only inspect heterogeneity in population parameters (such as, e.g., susceptibility to a specific disease); the parameters are considered as an inherent and invariant property of individuals, whereas the parameter values can vary between individuals. We call such heterogeneity \textit{parametric heterogeneity} (\cite{Dushoff1999}). Taking into account the parametric heterogeneity yields important changes for the population dynamics. Most importantly, it means that both \textit{evolutionary} and \textit{ecological} aspects of the dynamics have to be accounted for. In other words, we must follow not only the total population numbers (ecological aspects) but also the changes of the population structure (evolutionary aspects), which are described by the transformation of the parameter distribution in the population with time.

The simplest way to include the parametric heterogeneity into mathematical models is to divide populations into subgroups, such that each subgroup has its own specific parameter value. By increasing the number of groups we can assume that there is a continuous distribution of the parameter in the population.

To illustrate the mathematical approach to the parametric heterogeneity and present the notations used throughout the text, we start with the simplest possible mathematical model: the Malthus equation
\begin{equation}\label{eq1:1}
    \dot{N}=mN,
\end{equation}
where $N=N(t)$ is the total population size at time $t$. Equation \eqref{eq1:1} has the solution $N(t)=N(0)e^{mt}$ for the given initial condition $N(0)$.

Equation \eqref{eq1:1} has parameter $m$, which is the per capita growth rate (and is often called the Malthusian parameter). Model \eqref{eq1:1} can be used to describe, e.g., the growth of a bacterial colony. Employing equation \eqref{eq1:1} as a mathematical model to describe the changes of the population size notwithstanding, it is customary assumed that this parameter $m$ is the same for any individual in the population. This can be hardly true for any realistic population. The simplest way to describe the population which has different values of $m$ is to suppose that the total population consists of $k$ subpopulations, each of which has its own parameter value; i.e., we replace \eqref{eq1:1} with the following system of equations:
\begin{equation}\label{eq1:2}
    \dot{N}_i=m_iN_i,\quad i=1,\ldots,k,
\end{equation}
where now the total population size is $N(t)=\sum_i N_i(t)$. There are $k$ initial conditions for \eqref{eq1:2}: $N_i(0),\,i=1,\ldots,k$. Is it possible to describe the behavior of the total population size, whose evolution is governed by \eqref{eq1:2}, using only one equation? Summing all the equations in \eqref{eq1:2}, we obtain
\begin{equation}\label{eq1:3}
    \dot{N}=\sum_i m_iN_i=N\sum_i m_i\frac{N_i}{N}=\mathsf{E}_t[m]N=\bar{m}(t)N,
\end{equation}
where a natural notation used for the mean parameter value in the population at the time moment $t$:
$$
\bar{m}(t)=\mathsf{E}_t[m]=\sum_i m_ip_i(t),\quad p_i(t)=\frac{N_i(t)}{N(t)}\,,\quad \sum_i p_i(t)=1.
$$
Equation \eqref{eq1:3} is very similar to \eqref{eq1:1}, the difference is that it depends on the current parameter distribution in the population. If it is possible to find $\mathsf{E}_t[m]$, then the general solution to \eqref{eq1:3} is simply
\begin{equation}\label{eq1:4}
N(t)=N(0)\exp\{\int_0^t \mathsf{E}_\tau[m]\,d\tau\}
\end{equation}
for the initial population size $N(0)$.

An important disadvantage of the subgroup approach is that the heterogeneity within each group is not taken into account if a fixed number of groups is considered. As it was discussed, in case $k\to\infty$, we can conjecture that there is a limiting continuous parameter distribution. For the Malthus equation \eqref{eq1:1} this means that we suppose that each individual is characterized by its own parameter (trait) value $\omega$, such that the density of individuals with the given parameter value is $n(t,\omega)$, and the total population size is $N(t)=\int_\Omega n(t,\omega)\,d\omega$, where $\Omega$ is the set of the trait values. Now \eqref{eq1:3} is replaced with
\begin{equation}\label{eq1:5}
    \partial_t n(t,\omega)=m(\omega)n(t,\omega),
\end{equation}
where $\partial_t\equiv\frac{\partial}{\partial t}$. For \eqref{eq1:5} we need the initial condition $n(0,\omega)$. Using the notation $$p(t,\omega)=\frac{n(t,\omega)}{N(t)}$$ for the probability density function (pdf) of the current parameter distribution and the usual notation for the mean parameter value $$
\bar{m}(t)=\mathsf{E}_t[m]=\int_\Omega m(\omega)p(t,\omega)\,d\omega
$$
we find that the solution to \eqref{eq1:5} is again given by \eqref{eq1:4}. If one can calculate $\mathsf{E}_t[m]$ then the solution to \eqref{eq1:5} is found.

An important remark is worth spelling out. All the models we consider are deterministic. We use the probability theory language to describe evolution of the parameter distributions, however no stochastic effects are included into our models.

Models of the form \eqref{eq1:5} are infinite dimensional systems that describe the evolution of the parameter distribution with time along with the total population size. Such (and more complex) models were treated in \cite{Ackleh1998,Ackleh1999} from the general point of the theory of differential equations in infinite dimensional spaces. However, it turns out that many such models can be reduced to low dimensional systems of ordinary differential equations (ODEs) \cite{karev2009}. The theory of heterogeneous populations, outlined in \cite{karev2009}, provides the conditions when the mean parameter value $\mathsf{E}_t[\cdot]$ can be effectively calculated at any time moment using only the knowledge of the initial parameter distribution $p(0,\omega)$. Therefore the heterogeneous models, of which \eqref{eq1:5} is the simplest example, can be studies analytically. The examples include the analysis of heterogeneous Malthus equation \eqref{eq1:5} \cite{Karev2003,Karev2005}, Lotka--Volterra system \cite{novozhilov2004} (see also \cite{Ackleh2000}), tumor cell dynamics \cite{karev2006}, and the replicator equation \cite{karev2010}.

A very rich field for the mathematical models with parametric heterogeneity is theoretical epidemiology, where it is natural to assume that individuals vary with respect to their susceptibility to a disease, infectivity to pass a pathogen, contact number, etc. (see, e.g., \cite{Anderson1991,boylan1991note,bnp2010,coutinho1999modelling,Diekmann2000,Diekmann1990,Dwyer2002,Dwyer2000,Dwyer1997,Nikolaou2006,Veliov2005}). For many models from the cited literature, the general approach of the theory of heterogeneous populations yields important analytical results \cite{Novozhilov2008,nov2009hetero}. It is the main goal of the present manuscript to review and generalize these results.

The rest of the paper is organized as follows. In Section \ref{section:2} we present the necessary mathematical facts from the theory of heterogeneous populations with parametric heterogeneity. Section \ref{section:3} is devoted to formulation of various mathematical models of the epidemic spread that include parametric heterogeneity. In Section \ref{section:4} the machinery from Section \ref{section:2} is applied to the models from Section \ref{section:3}. Finally, Section \ref{conclusion} is devoted to discussion of some open problems and conclusions.

\section{Mathematical theory of heterogeneous populations}\label{section:2}
In this section we present the mathematical development of the theory of heterogeneous populations with parametric heterogeneity as it is required for the subsequent application to the mathematical models of the epidemic spread. We aim to present neither the most abstract possible formulations for such models (dubbed \emph{systems with inheritance}, see \cite{gorban2007selection,gorban2010self}) nor the most general form of the dynamical systems that can be analyzed by similar means \cite{karev2009}. We refer the reader to the literature for a notably wider account of the necessary theory \cite{gorban2007selection,karev2009,karev2010replicator}, and remark that our presentation has a primary goal to make the exposition self-contained.

Let us consider two interacting populations, each of which is characterized by its own parameter (trait) value at any given time moment, such that we can speak of the parameter  distribution in each population (the populations are \textit{heterogeneous} with respect to a given parameter). Denote the densities of the populations as $n_1(t,\omega_1)$ and $n_2(t,\omega_2)$. This implies that the total population sizes are given by
$$
N_1(t)=\int_{\Omega_1} n_1(t,\omega_1)\,d\omega_1,\quad N_2(t)=\int_{\Omega_2} n_2(t,\omega_2)\,d\omega_2,
$$
respectively. Here $\Omega_i,\,i=1,2$ are the sets of admissible parameter values, and we suppose that these sets are such that the corresponding integrals are always well defined (in fact, in the examples we usually suppose that $\Omega_i,\,i=1,2$ are intervals of $\mathbb R^1$, but the general theory does not require it). An obvious generalization of the considered situation is that $k$ populations are considered, some of which could be characterized by their own vector-parameters. Such generalization would require only additional notation, and therefore we deal with, without loss of generality, the case of two populations. For these two populations the pdfs are given by
$$
p_i(t,\omega_i)=\frac{n_i(t,\omega_i)}{N_i(t)}\,,\quad i=1,2,
$$
such that $p_i(t,\omega_i)\geq 0,\,\int_{\Omega_i}p_i(t,\omega_i)\,d\omega_i=1,\,i=1,2$ for any $t>0$.

To describe the dynamics of the densities of the populations under study, we start with the general model
\begin{equation}\label{eq2:1}
    \begin{split}
      \frac{\pdt n_1(t,\omega_1)}{n_1(t,\omega_1)} &= F_1\bigl(n_1(t,\omega_1),n_2(t,\omega_2)\bigr), \\
       \frac{\pdt n_2(t,\omega_2)}{n_2(t,\omega_1)} &= F_2\bigl(n_1(t,\omega_1),n_2(t,\omega_2)\bigr),
    \end{split}
\end{equation}
where $F_1,\,F_2$ are given functions (the per capita growth rates of the interacting populations) and $\pdt\equiv \frac{\partial}{\partial t}\,$. For \eqref{eq2:1} the initial conditions are $n_1(0,\omega_1)$ and $n_2(0,\omega_2)$.

It is the form of $F_1$ and $F_2$ that defines the dynamics of interacting populations, and to apply the theory of heterogeneous populations, as presented in, e.g., \cite{karev2009}, this form must satisfy some additional requirements. In particular, we adopt that
\begin{equation}\label{eq2:2}
    F_i\bigl(n_1(t,\omega_1),n_2(t,\omega_2)\bigr)=f_i(\V v)+\varphi(\omega_i)g_i(\V v),\quad i=1,2,
\end{equation}
where $f_i,\,g_i,\,\varphi_i,\,i=1,2$ are given functions,
\begin{equation}\label{eq2:2a}
\V v=(N_1,N_2,\bar{\varphi}_1(t),\bar{\varphi}_2(t)),
\end{equation}
and
\begin{equation}\label{eq2:3}
    \bar{\varphi}_i(t)=\mathsf{E}_t[\varphi_i]=\int_{\Omega_i}\varphi_i(\omega_i)p_i(t,\omega_i)\,d\omega_i,\quad i=1,2.
\end{equation}
In words, we require that the per capita growth rates depend explicitly only on the parameters, total population sizes and the mean values of the functions of the distributed parameters. Assume that we are interested only in the dynamics of the total population sizes. Integrating the first equation in \eqref{eq2:1} with respect to $\omega_1$ and the second one with respect to $\omega_2$, and using \eqref{eq2:2}, \eqref{eq2:3}, we obtain
\begin{equation}\label{eq2:4}
   \dot N_i = N_i\bigl(f_i(\V v)+\bar{\varphi}_i(t)g_i(\V v)\bigr), \quad i=1,2.
\end{equation}
Inasmuch as $\V v$ depends only on the total population sizes and $\bar{\varphi}_i(t),\,i=1,2$, the dynamics of the populations sizes can be found if $\bar{\varphi}_i(t),\,i=1,2$ are known. We remark that $\bar{\varphi}_i(t),\,i=1,2$ depend on the current parameter distributions, which actually has to be found. The remarkable fact is that it is possible to calculate $\bar{\varphi}_i(t),\,i=1,2$ for any time moment if the initial parameter distributions
$$
p_i(0,\omega_i)=\frac{n_i(0,\omega_i)}{N_i(0)}\,,\quad i=1,2
$$
are known and two additional differential equations are allowed. This is the main result for our presentation. To give it a precise statement, introduce the notations
\begin{equation}\label{eq2:5}
    \textsf{M}_i(t,\lambda)=\int_{\Omega_i}e^{\lambda \varphi_i(\omega_i)}p_i(t,\omega_i)\,d\omega_i,\quad i=1,2,
\end{equation}
for the moment generating functions (mgfs) of $\varphi_i(\omega_i),\,i=1,2$ respectively at any time moment. The mgf, it it exists, defines uniquely the given probability distribution. In the mgfs for the initial distributions we occasionally suppress the dependence on time:
$$
\mathsf{M}_i(\lambda)\equiv \mathsf{M}_i(0,\lambda),\quad i=1,2.
$$
It is a basic fact that if $\mathsf{M}_i(t,\lambda),\,i=1,2$ are known then the mean values of $\varphi_i(\omega_i)$ can be found by simple differentiation:
$$
\bar{\varphi}_i(t)=\mathsf{E}_t[\varphi_i]=\left.\partial_\lambda \mathsf{M}_i(t,\lambda)\right|_{\lambda=0},\quad i=1,2.
$$
Now we can state the following main
\begin{theorem}\label{th2:1} Let the dynamics of two interacting populations be described by system \eqref{eq2:1} with \eqref{eq2:2}, \eqref{eq2:2a}, \eqref{eq2:3}. Consider auxiliary variables $q_i(t),\,i=1,2$ that satisfy the following differential equations
\begin{equation}\label{eq2:6}
    \dot q_i=g_i(\V v),\quad q_i(0)=0,\quad i=1,2,
\end{equation}
where $g_i,\,i=1,2$ are as in \eqref{eq2:2}.
Then
\begin{equation}\label{eq2:7}
    \mathsf{M}_i(t,\lambda)=\frac{\mathsf{M}_i(0,\lambda+q_i(t))}{\mathsf{M}_i(0,q_i(t))}=\frac{\mathsf{M}_i(\lambda+q_i(t))}{\mathsf{M}_i(q_i(t))}\,,\quad i=1,2.
\end{equation}
\end{theorem}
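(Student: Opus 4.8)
The plan is to integrate the governing equation \eqref{eq2:1}--\eqref{eq2:2} explicitly in $t$, exploiting the crucial structural fact that the coefficients $f_i(\V v)$ and $g_i(\V v)$ do not depend on the trait variable $\omega_i$. First I would regard $\V v(t)$ as a known function of time along the solution whose existence is assumed, so that for each fixed $\omega_i$ equation \eqref{eq2:1} becomes the scalar linear ODE $\pdt n_i(t,\omega_i) = \bigl(f_i(\V v(t)) + \varphi_i(\omega_i) g_i(\V v(t))\bigr)n_i(t,\omega_i)$. Integrating, and recognizing that the auxiliary variable from \eqref{eq2:6} is precisely $q_i(t)=\int_0^t g_i(\V v(\tau))\,d\tau$, one gets
\begin{equation*}
n_i(t,\omega_i) = n_i(0,\omega_i)\,\exp\Bigl\{\textstyle\int_0^t f_i(\V v(\tau))\,d\tau\Bigr\}\,e^{\varphi_i(\omega_i)q_i(t)}.
\end{equation*}

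Next I would integrate this identity over $\Omega_i$ to recover the total population size. The $\omega_i$-independent exponential factors out, and the remaining integral is exactly $N_i(0)\,\mathsf{M}_i(0,q_i(t))$ by the definition \eqref{eq2:5}; hence $N_i(t) = N_i(0)\exp\{\int_0^t f_i\,d\tau\}\,\mathsf{M}_i(0,q_i(t))$. Dividing, the $f_i$-factor and the initial-size factor cancel, leaving the pushed-forward density
\begin{equation*}
p_i(t,\omega_i) = \frac{n_i(t,\omega_i)}{N_i(t)} = \frac{p_i(0,\omega_i)\,e^{\varphi_i(\omega_i)q_i(t)}}{\mathsf{M}_i(0,q_i(t))},
\end{equation*}
which is the exponential (Esscher) tilt of the initial distribution by $q_i(t)$. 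Finally I would compute the mgf at time $t$ directly from \eqref{eq2:5}: multiplying $p_i(t,\omega_i)$ by $e^{\lambda\varphi_i(\omega_i)}$ and integrating merges the two exponentials into $e^{(\lambda+q_i(t))\varphi_i(\omega_i)}$, so that $\mathsf{M}_i(t,\lambda) = \mathsf{M}_i(0,\lambda+q_i(t))/\mathsf{M}_i(0,q_i(t))$, which is \eqref{eq2:7}.

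The computations themselves are routine once the right substitution is in place; the only genuine point requiring care is the apparent circularity, since $\V v$ contains $\bar\varphi_i(t)$, which is itself determined by the unknown $p_i(t,\cdot)$. I would handle this by emphasizing that the argument above is a \emph{verification along an assumed solution}: for such a solution the functions $q_i(t)$ and the numbers $\mathsf{M}_i(0,\cdot)$ are well defined, and \eqref{eq2:7} follows; self-consistency is then recovered a posteriori, because \eqref{eq2:4} and \eqref{eq2:6} together with $\bar\varphi_i(t)=\partial_\lambda\mathsf{M}_i(t,\lambda)\bigl|_{\lambda=0}=\partial_\lambda\mathsf{M}_i(0,\lambda+q_i(t))\bigl|_{\lambda=0}/\mathsf{M}_i(0,q_i(t))$ constitute a closed finite-dimensional system for $(N_1,N_2,q_1,q_2)$. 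One should also assume throughout that the initial mgfs $\mathsf{M}_i(0,\lambda)$ are finite on a neighbourhood of the relevant range of $q_i(t)$, which is exactly what is needed for all integrals to converge and for differentiation under the integral sign to be justified.
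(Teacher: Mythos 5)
Your proposal is correct and follows essentially the same route as the paper: solve the linear ODE in $t$ for each fixed trait value to get $n_i(t,\omega_i)=n_i(0,\omega_i)\exp\{\int_0^t f_i\,d\tau+\varphi_i(\omega_i)q_i(t)\}$, integrate over $\Omega_i$ to identify $N_i(t)$ with $N_i(0)\exp\{\int_0^t f_i\,d\tau\}\mathsf{M}_i(0,q_i(t))$, and then read off the time-$t$ mgf as the exponential tilt, with the $f_i$-dependent factors cancelling. Your explicit remarks on the apparent circularity (verification along an assumed solution, closed a posteriori by the reduced ODE system) and on finiteness of the initial mgfs are sensible clarifications of points the paper leaves implicit.
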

\begin{proof}
From the first equation of \eqref{eq2:1} with \eqref{eq2:2} we have
\begin{equation}\label{eq2:8}
    \begin{split}
      n_1(t,\omega_1) & =n_1(0,\omega_1)\exp\{\int_0^t f_1(\V v)\,dt+\varphi_1(\omega_1)\int_0^t g_1(\V v)\,dt\}= \\
        & =n_1(0,\omega_1)\exp\{\int_0^t f_1(\V v)\,dt+\varphi_1(\omega_1)q_1(t)\}.
    \end{split}
\end{equation}
Using \eqref{eq2:8}, we find
\begin{equation}\label{eq2:9}
    \begin{split}
      N_1(t)&=\int_{\Omega_1} n_1(t,\omega_1)\,d\omega_1=\\
       & =\exp\{\int_0^t f_1(\V v)\,dt\} \int_{\Omega_1}n_1(0,\omega_1)\exp\{\varphi_1(\omega_1)q_1(t)\}\,d\omega_1= \\
        & =\frac{1}{N_1(0)}\exp\{\int_0^t f_1(\V v)\,dt\}\int_{\Omega_1}p_1(0,\omega_1)\exp\{\varphi_1(\omega_1)q_1(t)\}\,d\omega_1=\\
        &=\frac{1}{N_1(0)}\exp\{\int_0^t f_1(\V v)\,dt\} \mathsf{M}_1(0,q_1(t)).
    \end{split}
\end{equation}
Using the definition \eqref{eq2:5} for mgf,
\begin{equation}\label{eq2:10}
    \begin{split}
      \mathsf{M}_1(t,\lambda)&=\int_{\Omega_1} \exp\{\lambda \varphi_1(\omega_1)\} p_1(t,\omega_1)\,d\omega_1=\\
       & =\frac{1}{N_1(t)}\int_{\Omega_1} \exp\{\lambda \varphi_1(\omega_1)\} n_1(t,\omega_1)\,d\omega_1=\mbox{from \eqref{eq2:8}}\\
       &=\frac{\int_0^t f_1(\V v)\,dt}{N_1(0)N_1(t)}\int_{\Omega_1}\exp\{(\lambda+q_1(t))\varphi_1(\omega_1)\}p_1(0,\omega_1)\,d\omega_1=\\
       &=\frac{\int_0^t f_1(\V v)\,dt}{N_1(0)N_1(t)}\mathsf{M}_1(0,\lambda+q_1(t)).
    \end{split}
\end{equation}
Finally, applying \eqref{eq2:9} and \eqref{eq2:10} we obtain \eqref{eq2:7}. The same calculations are valid for the second population.
\end{proof}
\begin{remark}Theorem \ref{th2:1} allows to reduce system \eqref{eq2:1} with \eqref{eq2:2} to the four dimensional system of ODEs \eqref{eq2:4}, \eqref{eq2:6}, where the mean parameter values are
\begin{equation}\label{eq2:11}
    \bar{\varphi}_i(t)=\mathsf{E}_t[\varphi_i]=\left.\partial_\lambda \ln \mathsf{M}_i(\lambda)\right|_{\lambda=q_i(t)},\quad i=1,2.
\end{equation}
It is a simple matter (see, e.g., \cite{karev2009}) to prove that the mean parameter values satisfy the equations
\begin{equation}\label{eq2:12}
    \dot{\bar{\varphi}}_i(t)=g_i(\V v)\sigma^2_i(t),\quad i=1,2,
\end{equation}
where $\sigma_i^2(t),\,i=1,2$ are the current variances of $\varphi_i(\omega_i),\,i=1,2$.
\end{remark}

\begin{remark} Using Theorem \ref{th2:1} we find that the solution for the total population size of the heterogeneous Malthus equation \eqref{eq1:5} is given explicitly
as
$$
N(t)=N(0)\mathsf{M}(t),
$$
where $\mathsf{M}(t)$ is the mgf of the initial parameter distribution $p(0,\omega)$. This solution was studied in detail in \cite{Karev2000a,Karev2003}.
\end{remark}
\begin{remark}Theorem \ref{th2:1} shows that the mgfs at any time moment can be found from the mgfs at the initial moment using \eqref{eq2:7}. This theorem also shows that by reducing the original infinite dimensional system \eqref{eq2:1} with \eqref{eq2:2} to the system of ODEs \eqref{eq2:4}, \eqref{eq2:6} we actually do not loose any information because, as far as the functions $q_i(t),\,i=1,2$ are known, the densities $n_i(t,\omega_i),\,i=1,2$ at any time moment can be found from the total population sizes and the time dependent probability distributions $p_i(t,\omega_i),\,i=1,2$, which can be inferred from the corresponding mgfs. For the following exposition we do not require explicit formulas for the evolution of distributions, and refer the reader to \cite{karev2009}, where these results can be found.
\end{remark}
\begin{remark}Theorem \ref{th2:1} provides an analytical proof of \eqref{eq2:7}. It is interesting to note that there is a simple probabilistic proof of a similar formula \cite{aalen1994effects,aalen2008survival}.

Consider the so-called proportional \textit{frailty model} \cite{aalen2008survival}: it is assumed that the hazard rate of an individual is given as the product of an individual specific quantity $Z$ and a basic rate $\alpha(t)$:
\begin{equation}\label{eq2:13}
    \alpha(t|Z)=\alpha(t)Z.
\end{equation}
Here $Z$ plays the role of the parameter distributed in the population. Given $Z$ the probability of surviving up to time $t$ is
$$
S(t|Z)=e^{-ZA(t)},\quad A(t)=\int_0^t\alpha(\tau)\,d\tau.
$$
The population survival function is therefore
\begin{equation}\label{eq2:14}
    S(t)=\mathsf{P}[T>t]=\mathsf{E}[e^{-ZA(t)}]=\mathsf{M}(-A(t)).
\end{equation}
The frailty distribution in the population surviving at time $t$ can be found as follows (here $\mathsf{I}(T>t)$ is the indicator function):
\begin{equation}\label{eq2:15}
    \begin{split}
      \mathsf{M}(t,\lambda) & =\mathsf{E}_t[e^{\lambda Z}]=\mathsf{E}[e^{\lambda Z}|T>t]=\frac{\mathsf{E}[e^{\lambda Z}\mathsf{I}(T>t)]}{\mathsf{P}[T>t]}= \\
        & =\frac{\mathsf{E}[e^{\lambda Z-ZA(t)}]}{\mathsf{E}[e^{-ZA(t)}]}=\frac{\mathsf{M}(\lambda-A(t))}{\mathsf{M}(-A(t))}\,.
\end{split}
\end{equation}
Using $q(t)=-A(t)$ in \eqref{eq2:15} we obtain \eqref{eq2:7}.

\end{remark}

Concluding, using the mathematical theory of heterogeneous populations with parametric heterogeneity, outlined in this section, it is possible to model communities of populations when each population is characterized by its own parameter value, and the per capita growth rates depend on this parameter and the average characteristics of the interacting populations.

\section{Heterogeneous models in epidemiology: Model formulation}\label{section:3}
The modern mathematical epidemiology has its roots in now classical work by Kermack and McKendrick \cite{Diekmann1993,Kermack1927}, where the total population of the constant size $N$ was subdivided into three categories: susceptible individuals $S$ that are prone to infection, infectious individuals $I$ that transmit the disease, and removed individuals $R$ that either acquire life-long immunity or die. In the simplest case, assuming that the period of infection is exponentially distributed with the mean $1/\gamma$ and the transmission process is described by mass-action kinetics \cite{Heesterbeek2005,mccallum2001} (this means that the contact rate is proportional to the total population size $N$), we have that the dynamics is described by
\begin{equation}\label{eq3:1}
    \begin{split}
      \dot{S} & =-\beta SI, \\
        \dot{I} &=\beta SI-\gamma I,\\
        \dot{R}&=\gamma I,
        \end{split}
\end{equation}
where $\beta$ is the transmission parameter that encompasses the information on  the probability of a successful contact (i.e., the contact that results in infection) and the contact rate. From \eqref{eq3:1} it follows that $S(t)+I(t)+R(t)=N$ for any $t$, therefore the third equation is redundant and in the following we usually omit it. SIR model \eqref{eq3:1} includes two parameters that in reality vary from individual to individual. Therefore, it is important to take this heterogeneity into account.

\subsection{Heterogeneous SIR model with distributed susceptibility to a disease}
There are a number of studies in the literature which model heterogeneous susceptibility to a disease, with either a finite number of different susceptibility classes  \cite{andersson2000stochastic,ball1985deterministic,bonzi2011stability,Dushoff1999,gart1972statistical,Hsu2002,Hyman2005,rodrigues2009heterogeneity,scalia1986asymptotic} or with a continuous distribution of susceptibility \cite{boylan1991note,coutinho1999modelling,Dwyer2002,Dwyer2000,Dwyer1997}. We present a formulation from \cite{Novozhilov2008} keeping in mind that both discrete and continuous distributions can be accommodated.

Let us denote $s(t,\omega)$ the density of the susceptible individuals having the trait value $\omega$. The total size of the susceptible population is $S(t)=\int_\Omega s(t,\omega)\,d\omega$. Assuming the law of mass action, we obtain that the changes in the susceptible and infectious populations are described by
\begin{equation}\label{eq3:2}
    \begin{split}
      \pdt s(t,\omega) & =-\beta(\omega)s(t,\omega)I(t), \\
       \dot{I}(t) &= I(t)\int_\Omega \beta(\omega)s(t,\omega)\,d\omega-\gamma I(t).
    \end{split}
\end{equation}
Or, using the notations $$\bar{\beta}(t)=\mathsf{E}_t[\beta]=\int_\Omega \beta(\omega)p_s(t,\omega)\,d\omega,\quad p_s(t,\omega)=\frac{s(t,\omega)}{S(t)}\,,$$ \eqref{eq3:2} can be rewritten as
\begin{equation}\label{eq3:3}
    \begin{split}
      \pdt s(t,\omega) & =-\beta(\omega)s(t,\omega)I(t), \\
       \dot{I}(t) &= \bar{\beta}(t)SI-\gamma I(t).
    \end{split}
\end{equation}
For \eqref{eq3:3} the initial conditions are $s(0,\omega)=s_0(\omega)=S_0p_s(0,\omega),\,I(0)=I_0$.

Model \eqref{eq3:2}, \eqref{eq3:3} can be also obtained from the general epidemic equation \cite{Diekmann2000}
\begin{equation}\label{eq3:4}
    \pdt s(t,\omega)=s(t,\omega)\int_\Omega\int_0^\infty A(\tau,\omega,\eta)\pdt s(t-\tau,\eta)\,d\tau d\eta,
\end{equation}
where $A(\tau,\omega,\eta)$ is the expected infectivity of an individual that was infected $\tau$ units ago while having trait value $\eta$ towards to a susceptible with trait value $\omega$. If we assume that $A(\tau,\omega,\eta)=\beta(\omega)f(\tau)$ for a given function $f$ and set
$$
I(t)=-\int_\Omega\int_0^t f(\tau)\pdt s(t-\tau,\eta)\,d\tau d\eta,
$$
then, after some algebra, we obtain
\begin{equation}\label{eq3:5}
    \begin{split}
      \pdt s(t,\omega) & =-\beta(\omega)s(t,\omega)I(t), \\
       \dot{I}(t) &= f(0)I(t)\int_\Omega \beta(\omega)s(t,\omega)\,d\omega-\int_\Omega\int_0^\infty f'(\tau)\pdt s(t-\tau,\eta)\,d\tau d\eta.
    \end{split}
\end{equation}
Letting $f(\tau)=e^{-\gamma \tau}$ reduces \eqref{eq3:5} to \eqref{eq3:3} (this can be shown to be the only case to end up with an ODE system, for this it is necessary and sufficient that $f$ satisfies the equation $\dot f=-\gamma f$).
\begin{remark} If in the general epidemic equation \eqref{eq3:4} it is assumed that $f(\tau)=\chi(T-\tau)$, where $\chi$ is the Heaviside function then from \eqref{eq3:5} it follows that
\begin{equation}\label{eq3:6}
    \begin{split}
      \pdt s(t,\omega) & =-\beta(\omega)s(t,\omega)I(t), \\
       \dot{I}(t) &= \bar{\beta}(t)SI-\bar{\beta}(t-T)S(t-T)I(t-T).
    \end{split}
\end{equation}
Model \eqref{eq3:6} is the model with distributed susceptibility studied in \cite{Dwyer2002,Dwyer2000,Dwyer1997}.
\end{remark}

\subsection{Heterogeneous SIR model with distributed infectivity}
Let $\beta(\omega)$ be the transmission parameter of an individual with the infectivity value $\omega$, and $i(t,\omega)$ be the density of the infectious hosts with trait value $\omega$ at the time moment $t$, $I(t)=\int_\Omega i(t,\omega)\,d\omega$. We assume that the susceptible population is homogeneous. To describe the dynamics of the infectious population it is necessary to specify what trait value is assigned to a newly infected individual, which was infected by an infectious individual with the trait value $\eta$. Denoting $\psi(\omega,\eta)$ the pdf that prescribes the probability that a newly infected individual is assigned the trait value $\omega$ if infected by an individual with the trait value $\eta$, we obtain
\begin{equation}\label{eq3:7}
    \begin{split}
      \dot{S}(t) & =-\bar{\beta}(t)S(t)I(t), \\
       \pdt i(t,\omega) &= S(t)\int_\Omega \psi(\omega,\eta)\beta(\eta)i(t,\eta)\,d\eta-\gamma i(t,\omega),
    \end{split}
\end{equation}
where now $$\bar{\beta}(t)=\int_\Omega \beta(\omega) p_i(t,\omega)\,d\omega,\quad p_i(t,\omega)=\frac{i(t,\omega)}{I(t)}\,.$$ The initial conditions are $S(0)=S_0,\,i(0,\omega)=i_0(\omega)=I_0p_i(0,\omega)$.

There are several natural choices for $\psi(\omega,\eta)$, the simplest of which is $\psi(\omega,\eta)=\delta(\omega-\omega')$, where $\delta$ is the delta-function. This option means that the newly infected individual acquires the trait value of the person by whom he was infected (this is equivalent to $\psi(\omega,\eta)=p_i(t,\omega)$, i.e., the trait values are assigned according to the current distribution of the infectivity). Using $\psi(\omega,\eta)=\delta(\omega-\omega')$ in \eqref{eq3:7} we obtain
\begin{equation}\label{eq3:8}
    \begin{split}
      \dot{S}(t) & =-\bar{\beta}(t)S(t)I(t), \\
       \pdt i(t,\omega) &= \beta(\omega)i(t,\omega)S(t)-\gamma i(t,\omega).
    \end{split}
\end{equation}
Model \eqref{eq3:8} is very similar in form to \eqref{eq3:3}. However, it should be clear that, in general, model \eqref{eq3:3} is much closer to reality than \eqref{eq3:8}, which corresponds to the case when several different strains of an infection can be passed on.

\subsection{Heterogeneous SIR model with distributed susceptibility and infectivity}
Let us assume now that both the susceptibility and infectivity are distributed in the population experiencing the disease; this will generalize models \eqref{eq3:3} and \eqref{eq3:8}.

Let $s(t,\omega_1)$ and $i(t,\omega_2)$ be the densities of the susceptible and infectious individuals respectively. For the following, simplifying, we assume that
the traits of the two subpopulations are independent, i.e.,
$\beta(\omega_1,\omega_2)=\beta_1(\omega_1)\beta_2(\omega_2)$. The
number of susceptible hosts with the trait value $\omega_1$ infected by
infectious individuals with the trait value $\omega_2$ is given by
$$\beta_1(\omega_1)s(t,\omega_1)\beta_2(\omega_2)i(t,\omega_2).$$ Therefore,
the total change in the infectious subpopulation with trait value $\omega_2$
is $$\beta_2(\omega_2)i(t,\omega_2)\int_{\Omega_1}\beta_1(\omega_1)s(t,\omega_1)\,d\omega_1,$$ assuming that $\psi(\omega,\eta)=\delta(\omega-\eta)$ (cf. \eqref{eq3:8}).
A similar expression describes the change in the susceptible
population. It is worth emphasizing that nothing else except for the standard
law of mass action is supposed to formulate the terms for the change
in susceptible and infectious subpopulations. Combining the above
assumptions we obtain the following model:
\begin{equation}\label{si1}
    \begin{split}
    \pdt s(t,\omega_1)  &=-\beta_1(\omega_1)s(t,\omega_1)\int_{\Omega_2}\beta_2(\omega_2) i(t,\omega_2)\, d\omega_2\\
           &=-\beta_1(\omega_1)s(t,\omega_1)\bar{\beta}_2(t)I(t),\\
    \pdt i(t,\omega_2)  &=\beta_2(\omega_2)i(t,\omega_2)\int_{\Omega_1}\beta_1(\omega_1) s(t,\omega_1)\,
    d\omega_1-\gamma i(t,\omega_2)\\
&=\beta_2(\omega_2)i(t,\omega_2)\bar{\beta}_1(t)S(t)-\gamma i(t,\omega_2),
\end{split}
\end{equation}
where
$$
\bar{\beta}_1(t)=\mathsf{E}_t[\beta_1]=\int_{\Omega_1}\beta_1(\omega_1)p_s(t,\omega_1)\,d\omega_1,\quad \bar{\beta}_2(t)=\mathsf{E}_t[\beta_2]=\int_{\Omega_2}\beta_2(\omega_2)p_i(t,\omega_2)\,d\omega_2.
$$
Model \eqref{si1} is supplemented with the initial conditions
$s(0,\omega_1)=S_0p_s(0,\omega_1),\,i(0,\omega_2)=I_0p_i(0,\omega_2)$.

In the case $\gamma=0$ we obtain the heterogeneous SI model with distributed susceptibility and infectivity
\begin{equation}\label{si2}
    \begin{split}
    \pdt s(t,\omega_1)  &=-\beta_1(\omega_1)s(t,\omega_1)\bar{\beta}_2(t)I(t),\\
    \pdt i(t,\omega_2)  &=\beta_2(\omega_2)i(t,\omega_2)\bar{\beta}_1(t)S(t),
\end{split}
\end{equation}
for which the global dynamics is simple and is similar to the
simplest homogeneous SI model, when $S(t)\to 0,\,I(t)\to N,$ when $t\to\infty$.

\subsection{Heterogeneous SI model with heterogeneous contact structure}
Above we discussed mainly about heterogeneity of
the hosts: whether all susceptible individuals are of the same type
with equal susceptibility, and whether all infectious individuals
 have equal ability to infect others. Another aspect of heterogeneity
is the possible heterogeneous social contact network, which is the one of the central topics in mathematical epidemiology, e.g,~\cite{andersson2000stochastic,Bansal2007,Danon2011,keeling2008modeling,meyers2007contact}. It is difficult to apply the general theory of
heterogeneous populations as presented in Section \ref{section:2} to such models, however, there
is a simple case, for which some results can be obtained.

Let us assume that $n(t,\omega)$ denotes the density of individuals
in the population, which are making $\omega$ contacts on average.
Every individual can be contacted by another individual, and the individuals
differ in an average number of contacts. This
situation is usually termed as \textit{separable mixing}. If we
denote $r$ the probability of transmission the disease given a
contact, then, the simplest SIR-model with separable mixing can be
described by the following system:
\begin{equation}\label{eq3a:4}
\begin{split}
\pdt s(t,w)&=-r\omega s(t,\omega)\frac{\int_{\Omega}\omega i(t,\omega)d\omega}{\int_{\Omega}\omega s(t,\omega)\,d\omega+\int_{\Omega}\omega i(t,\omega)\,d\omega}\,,\\
\pdt i(t,w)&=-r\omega
s(t,\omega)\frac{\int_{\Omega}\omega
i(t,\omega)d\omega}{\int_{\Omega}\omega s(t,\omega)\,d\omega+\int_{\Omega}\omega i(t,\omega)\,d\omega}-\gamma i(t,\omega)\,.
\end{split}
\end{equation}

In case of SI model \eqref{si2} ($\gamma=0$) we have $s(t,\omega)+i(t,\omega)=n_0(\omega)$ for any $t$, and
$n_0(\omega)$ is a given density which specifies probability density
function of the contact distribution. Using the property
$i(t,\omega)=n_0(\omega)-s(t,\omega)$, we obtain
\begin{equation}\label{eq3a:5}
\pdt s(t,w)=-r\omega
s(t,\omega)\left[1-\frac{\int_{\Omega}\omega
s(t,\omega)d\omega}{\int_{\Omega}\omega n_0(\omega)d\omega}\right].
\end{equation}

We conclude this section with an obvious statement that the list of possible epidemiological models with parametric heterogeneity can be made longer. What is important, however, that models \eqref{eq3:3}, \eqref{eq3:6}, \eqref{eq3:8}, \eqref{si1}, \eqref{si2}, and \eqref{eq3a:5} all are written in the form \eqref{eq2:1}--\eqref{eq2:3}, which allows a unified treatment of these models within the mathematical framework outlined in Section \ref{section:2}.

\section{Mathematical analysis of an SIR model}\label{section:4}
\subsection{Analysis of heterogeneous SIR model with distributed susceptibility and infectivity}
Model \eqref{si1} comprises, as particular cases, models \eqref{eq3:3} and \eqref{eq3:8}. Therefore, the main result is stated for this model (the case of distributed susceptibility alone was treated in \cite{Novozhilov2008}).
\begin{theorem}\label{th:2} Heterogeneous SIR model \eqref{si1} with distributed susceptibility and infectivity is equivalent to the following two-dimensional non-autonomous system of ODEs:
\begin{equation}\label{eq4:1}
    \begin{split}
      \dot{S} &=-h_1(S)h_2(t,I), \\
      \dot{I} &= h_1(S)h_2(t,I)-\gamma I,
    \end{split}
\end{equation}
where
\begin{equation}\label{eq4:2}
    h_1(S)=S_0({\partial_\lambda \mathsf{M}^{-1}_1(0,\lambda)|_{\lambda=S/S_0}})^{-1},
\end{equation}
\begin{equation}\label{eq4:3}
    h_2(t,I)=I_0e^{-\gamma t}({\partial_\lambda \mathsf{M}^{-1}_2(0,\lambda)|_{\lambda=Ie^{\gamma t}/I_0}})^{-1},
\end{equation}
and $\mathsf{M}^{-1}_i(0,\lambda),\,i=1,2$ are the inverse functions to the mgfs of the initial distributions of susceptibility and infectivity respectively.
\end{theorem}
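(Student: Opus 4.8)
The plan is to apply Theorem~\ref{th2:1} to model \eqref{si1}, which is already in the form \eqref{eq2:1}--\eqref{eq2:3}. First I would identify the structural ingredients: the susceptible population has per capita growth rate $-\beta_1(\omega_1)\bar\beta_2(t)I(t)$, so in the notation of \eqref{eq2:2} we have $\varphi_1=\beta_1$, $f_1\equiv 0$, and $g_1(\V v)=-\bar\beta_2(t)I(t)$; the infectious population has per capita rate $\beta_2(\omega_2)\bar\beta_1(t)S(t)-\gamma$, so $\varphi_2=\beta_2$, $f_2(\V v)=-\gamma$, and $g_2(\V v)=\bar\beta_1(t)S(t)$. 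Hence by \eqref{eq2:6} the auxiliary variables satisfy $\dot q_1=-\bar\beta_2(t)I$ and $\dot q_2=\bar\beta_1(t)S$, with $q_1(0)=q_2(0)=0$.

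Next I would obtain closed-form expressions for $q_1$ and $q_2$ by integrating the underlying PDEs directly rather than through the $\bar\beta_i$. From the first equation of \eqref{si1}, $\partial_t\ln s(t,\omega_1)=-\beta_1(\omega_1)\bar\beta_2(t)I(t)=\beta_1(\omega_1)\dot q_1(t)$, so $s(t,\omega_1)=s_0(\omega_1)e^{\beta_1(\omega_1)q_1(t)}$, exactly as in \eqref{eq2:8}. Integrating over $\Omega_1$ and dividing by $S_0$ gives $S(t)=S_0\mathsf{M}_1(0,q_1(t))$, i.e. $q_1(t)=\mathsf{M}_1^{-1}(0,S(t)/S_0)$. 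Similarly, from the second equation, $\partial_t\ln i(t,\omega_2)=\beta_2(\omega_2)\bar\beta_1(t)S(t)-\gamma=\beta_2(\omega_2)\dot q_2(t)-\gamma$, so $i(t,\omega_2)=i_0(\omega_2)e^{\beta_2(\omega_2)q_2(t)-\gamma t}$; integrating yields $I(t)=I_0e^{-\gamma t}\mathsf{M}_2(0,q_2(t))$, hence $q_2(t)=\mathsf{M}_2^{-1}(0,I(t)e^{\gamma t}/I_0)$. These are the expressions hidden inside $h_1$ and $h_2$: note that $\partial_\lambda\mathsf{M}_1^{-1}(0,\lambda)|_{\lambda=S/S_0}$ is the reciprocal of $\partial_\mu\mathsf{M}_1(0,\mu)|_{\mu=q_1}=\mathsf{M}_1(0,q_1)\bar\beta_1(t)=(S/S_0)\bar\beta_1(t)$, so \eqref{eq4:2} just says $h_1(S)=S\bar\beta_1(t)$; likewise \eqref{eq4:3} says $h_2(t,I)=I\bar\beta_2(t)$ after the same reciprocal-derivative identity applied to $\mathsf{M}_2$ with argument $Ie^{\gamma t}/I_0$.

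Then I would close the loop: substituting $h_1(S)h_2(t,I)=\bar\beta_1(t)\bar\beta_2(t)SI$ into \eqref{eq4:1} reproduces exactly the total-population equations $\dot S=-\bar\beta_1(t)\bar\beta_2(t)SI$ and $\dot I=\bar\beta_1(t)\bar\beta_2(t)SI-\gamma I$ obtained by integrating \eqref{si1} over $\Omega_1$ and $\Omega_2$ (this is \eqref{eq2:4} specialized to the present $f_i,g_i$). The key point is that $\bar\beta_1$ and $\bar\beta_2$, which a priori depend on the unknown evolving distributions, have now been re-expressed purely in terms of $S$, $I$ and $t$ via the inverse mgf's and the relations $S=S_0\mathsf{M}_1(0,q_1)$, $I=I_0e^{-\gamma t}\mathsf{M}_2(0,q_2)$; this is the content of Theorem~\ref{th2:1}, which guarantees the time-dependent distributions stay in the exponential family generated by the initial ones. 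Conversely, given a solution of the ODE system \eqref{eq4:1}, one recovers $q_1,q_2$ from the inverse-mgf formulas and then the full densities from $s(t,\omega_1)=s_0(\omega_1)e^{\beta_1(\omega_1)q_1(t)}$ and $i(t,\omega_2)=i_0(\omega_2)e^{\beta_2(\omega_2)q_2(t)-\gamma t}$, establishing equivalence in both directions.

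The main obstacle is bookkeeping rather than conceptual: one must handle the inverse mgf's carefully, checking that $\mathsf{M}_i(0,\cdot)$ is strictly increasing (it is, being a moment generating function of a nonnegative random variable $\beta_i(\omega_i)$ with positive variance) on the relevant range so that $\mathsf{M}_i^{-1}$ is well defined, confirm that $q_1(t)\le 0$ and $q_2(t)\ge 0$ so the arguments stay in the domain, and verify the reciprocal-derivative identity $\partial_\lambda\mathsf{M}_i^{-1}(0,\lambda)=1/\big(\partial_\mu\mathsf{M}_i(0,\mu)\big|_{\mu=\mathsf{M}_i^{-1}(0,\lambda)}\big)$ together with $\partial_\mu\mathsf{M}_i(0,\mu)=\mathsf{M}_i(0,\mu)\,\mathsf{E}_t[\beta_i]$ at $\mu=q_i(t)$. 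Once these elementary facts are in place, formulas \eqref{eq4:2}--\eqref{eq4:3} follow by direct substitution and the theorem is proved.
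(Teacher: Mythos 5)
Your proposal is correct and follows essentially the same route as the paper: reduce \eqref{si1} via Theorem~\ref{th2:1} to the system with auxiliary variables $q_1,q_2$, obtain the first integrals $S/S_0=\mathsf{M}_1(0,q_1)$ and $Ie^{\gamma t}/I_0=\mathsf{M}_2(0,q_2)$, invert the mgfs, and use the inverse function theorem to express $\bar\beta_1\bar\beta_2 SI$ as $h_1(S)h_2(t,I)$. The only cosmetic difference is that you derive the first integrals by integrating the densities directly (re-tracing the proof of Theorem~\ref{th2:1}) rather than from the reduced ODE system, and you spell out the reciprocal-derivative identity that the paper leaves implicit.
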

\begin{proof}According to Theorem \ref{th2:1}, system \eqref{si1} is equivalent to
\begin{equation}\label{eq4:4}
    \begin{split}
      \dot{S} &=-\bar{\beta}_1(t)\bar{\beta}_2(t)SI, \\
      \dot{I} &=\bar{\beta}_1(t)\bar{\beta}_2(t)SI-\gamma I,\\
      \dot q_1 &=-\bar{\beta}_2(t)I,\\
      \dot q_2 &=\bar{\beta}_1(t)S,
    \end{split}
\end{equation}
where
$$
\bar{\beta}_i(t)=\partial_\lambda \ln \mathsf{M}_i(0,\lambda)|_{\lambda=q_i(t)},\quad i=1,2.
$$
From \eqref{eq4:4} it follows that
\begin{equation*}
    \begin{split}
      \frac{d}{dt}{\ln S} &=\frac{d}{dt} \ln \mathsf{M}_1(0,q_1),\\
       \frac{d}{dt} \ln I&= \frac{d}{dt} \ln \mathsf{M}_2 (0,q_2)-\gamma,\\
    \end{split}
\end{equation*}
from which we have two first integrals
\begin{equation}\label{eq4:6}
      \frac{S}{S_0} = \mathsf{M}_1(0,q_1),\quad  \frac{Ie^{\gamma t}}{I_0} = \mathsf{M}_2 (0,q_2).\\
\end{equation}
Since the mgfs are monotone functions, we can express $q_1$ and $q_2$ in \eqref{eq4:6} through $S$ and $I$ respectively. Putting the resulting expressions into \eqref{eq4:4} and remembering the inverse function theorem, we obtain \eqref{eq4:1} with \eqref{eq4:2}, \eqref{eq4:3}.
\end{proof}

Theorem \ref{th:2} gives an important example of a bottom up approach in building mathematical models in epidemiology \cite{de2005unstructured,hastings2005unstructured}. We start with a detailed model when each individual in the population has its own parameter value. This model, albeit still oversimplified and unrealistic, takes into account physiological structure of the population without any \textit{ad hoc} assumptions on the dynamics of the total population sizes. The general theory of the heterogeneous populations with parametric heterogeneity allows to reduce the model to the population-level description, which, incidentally, still can be described as a classical SIR model, which contains non-linear and time-dependent transmission rates. This shows that a great deal of mathematical models, built from the first principles, are still valid and can be used for ecological predictions, because they can be shown to follow from detailed individual-based models.

Let us consider an explicit example (more details are given in \cite{nov2009hetero}).

\paragraph{On the power law transmission function.} The \textit{transmission function} (the number of new infectious cases per time unit) is considered to be one of the major ingredients of the models describing spread of an infectious disease \cite{Diekmann2000}. Historically, borrowing an analogy from the chemical kinetics, this function was supposed to be of a simple bilinear form, $\propto SI$, which means that the law of mass action is assumed to hold \cite{Heesterbeek2005} (this is often called density dependent transmission function). If one assumes that the number of contacts is fixed for any individual and does not depend on the population size, than the transmission function takes the form $\propto SI/N$ (frequency dependent transmission function). If the population size constant these two transmission function yield the same predictions, whereas variable population size can produce different behaviors (e.g., \cite{Berezovskaya2007b,Novozhilov2006a}). Other transmission functions are also possible, see \cite{mccallum2001}. Important point here is that most of the used nonlinear functions, that are somewhat intermediate between density and frequency dependent modes of transmission, are phenomenological and lack any mechanistic derivation.

One of most frequently used transmission function takes the form
\begin{equation}\label{eq4:7}
    T(S,I)=\beta S^p I^q, \quad p,q>0.
\end{equation}
This function is often called the power law transmission function. It was used in, e.g., \cite{wilson1945lma,wilson1945lmab} in the form $T(S,I)=\beta S^p I$. Severo \cite{severo1967} considered the full form \eqref{eq4:7}, see also \cite{Liu1987,Liu1986a} for a detailed mathematical analysis of epidemiological models with the power law transmission function. Simulations \cite{roy2006,Stroud2006} show that power law transmission function indeed improves an accuracy of the mean-field SIR models. Using Theorem \ref{th:2} we can show that the power law transmission function can be inferred from the heterogeneous SIR models \eqref{si1} and \eqref{si2}.

Let us assume that the initial distribution of susceptibility is a gamma-distribution with parameters $k_1$ and $\nu_1$:
\begin{equation}\label{eq4:8}
p_1(0,\omega)=\frac{\nu_1^{k_1}}{\Gamma(k_1)}\omega^{k_1-1}w^{-\nu_1 \omega},\quad \omega\geq 0,\,\quad k_1,\nu_1>0.
\end{equation}
From \eqref{eq4:8} one has, assuming additionally that $\beta_1(\omega_1)=\omega_1$,
\begin{equation}\label{eq4:9}
    \mathsf{M}_1(\lambda)=\left(1-\frac {\lambda}{\nu_1}\right)^{-k_1},
\end{equation}
therefore the expression in \eqref{eq4:2} is given by
\begin{equation}\label{eq4:10}
    h_1(S)=\frac{k_1}{\nu_1}S\left(\frac{S}{S_0}\right)^{1/k_1}.
\end{equation}
An analogous expression can be obtained for $h_2(t,I)$ if it is postulated that the initial distribution of infectivity is also a gamma-distribution \eqref{eq4:8} with parameters $k_2,\nu_2$ and $\beta_2(\omega_2)=\omega_2$. Expression \eqref{eq4:10} implies
\begin{corollary} The power law transmission function \eqref{eq4:7} with the heterogeneity parameters $q=1,\,p=1+1/k_1$ can be obtained as a consequence of the distributed heterogeneous SIR model \eqref{eq3:3} with distributed susceptibility if the initial distribution of susceptibility is the gamma-distribution with parameters $k_1,\,\nu_1$.

The power law transmission function \eqref{eq4:7} with the heterogeneity parameters $q=1+1/k_2,\,p=1+1/k_1$ can be obtained as a consequence of the distributed heterogeneous SI model \eqref{si2} with distributed susceptibility and infectivity if the initial distribution of susceptibility is the gamma-distribution with parameters $k_1,\,\nu_1$ and the initial distribution of infectivity if the gamma-distribution with parameters $k_2,\nu_2$.
\end{corollary}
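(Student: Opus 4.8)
The plan is to apply Theorem~\ref{th:2} and then carry out the one explicit computation it leaves open: evaluating $h_1$ and $h_2$ when the initial trait distributions are gamma. Once that is done, the corollary is just a matter of reading off the incidence term of the reduced system. For model \eqref{eq3:3} (susceptibility distributed, infectivity homogeneous, so that $\beta_2\equiv 1$ is a point mass and the corresponding $h_2$ degenerates to $h_2(t,I)=I$) the number of new cases per unit time is $\bar\beta(t)SI=h_1(S)I$; for model \eqref{si2} it is $\bar\beta_1(t)\bar\beta_2(t)SI=h_1(S)h_2(t,I)$ with $\gamma=0$. So everything hinges on showing that, for a gamma-distributed trait, $h_1$ and $h_2$ are monomials in $S$ and $I$ with the stated exponents.

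First I would record the moment generating function of \eqref{eq4:8} with $\beta_1(\omega_1)=\omega_1$. Computing $\int_0^\infty e^{\lambda\omega}\,\tfrac{\nu_1^{k_1}}{\Gamma(k_1)}\omega^{k_1-1}e^{-\nu_1\omega}\,d\omega$ by the substitution $u=(\nu_1-\lambda)\omega$ gives \eqref{eq4:9}, namely $\mathsf M_1(\lambda)=(1-\lambda/\nu_1)^{-k_1}$ for $\lambda<\nu_1$. Inverting, $\mathsf M_1^{-1}(0,y)=\nu_1\bigl(1-y^{-1/k_1}\bigr)$, so $\partial_\lambda\mathsf M_1^{-1}(0,\lambda)=\tfrac{\nu_1}{k_1}\lambda^{-(1+1/k_1)}$. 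Substituting $\lambda=S/S_0$ into \eqref{eq4:2} yields
\[
h_1(S)=S_0\cdot\frac{k_1}{\nu_1}\Bigl(\frac{S}{S_0}\Bigr)^{1+1/k_1}=\frac{k_1}{\nu_1}\,S\Bigl(\frac{S}{S_0}\Bigr)^{1/k_1},
\]
which is \eqref{eq4:10}. The identical computation with the gamma density for infectivity and $\beta_2(\omega_2)=\omega_2$, applied to \eqref{eq4:3} with $\gamma=0$, gives $h_2(I)=\tfrac{k_2}{\nu_2}I\,(I/I_0)^{1/k_2}$ (the factors $e^{\pm\gamma t}$ cancel; for the degenerate case in \eqref{eq3:3} one instead uses $\mathsf M_2(\lambda)=e^{\lambda}$, $\mathsf M_2^{-1}(0,y)=\ln y$, which yields $h_2(t,I)=I$).

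Assembling the incidence terms then finishes the proof. For \eqref{eq3:3} the rate of new infections is
\[
h_1(S)\,I=\frac{k_1}{\nu_1 S_0^{1/k_1}}\,S^{\,1+1/k_1}\,I,
\]
which is \eqref{eq4:7} with $\beta=k_1/(\nu_1 S_0^{1/k_1})$, $p=1+1/k_1$, $q=1$. For \eqref{si2} the rate is
\[
h_1(S)\,h_2(I)=\frac{k_1 k_2}{\nu_1\nu_2 S_0^{1/k_1} I_0^{1/k_2}}\,S^{\,1+1/k_1}\,I^{\,1+1/k_2},
\]
which is \eqref{eq4:7} with $p=1+1/k_1$ and $q=1+1/k_2$. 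There is no genuinely hard step here; the only points that require attention — rather than real difficulty — are (i) checking that the moment generating functions are always evaluated inside their domain, which holds because $\dot q_i\le 0$ along solutions (as $\dot q_i=-\bar\beta_j(t)(\cdot)\le 0$), hence $q_i(t)\le 0<\nu_i$ and \eqref{eq4:9} is valid throughout, with $S/S_0,\,I/I_0\in(0,1]$ lying in the range of the relevant mgf; and (ii) the degenerate identification $h_2(t,I)=I$ for the susceptibility-only model \eqref{eq3:3}, obtained by specializing the reduction of Theorem~\ref{th:2} to a homogeneous infectious population.
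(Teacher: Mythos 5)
Your argument follows the paper's route exactly: compute the gamma mgf \eqref{eq4:9}, invert it to get $\mathsf M_1^{-1}(0,y)=\nu_1(1-y^{-1/k_1})$, differentiate, and substitute into \eqref{eq4:2}--\eqref{eq4:3} to obtain the monomial forms of $h_1$ and $h_2$, from which the exponents $p=1+1/k_1$ and $q=1$ (resp.\ $q=1+1/k_2$) are read off --- this is precisely the paper's derivation of \eqref{eq4:10} and its analogue. One slip in your domain aside: for the infectivity variable one has $\dot q_2=\bar\beta_1(t)S\ge 0$, not $\le 0$, so $q_2$ increases from $0$ and $I/I_0\ge 1$ rather than lying in $(0,1]$; the mgf is nevertheless evaluated inside its domain because the first integral $I/I_0=\mathsf M_2(q_2)$ together with $I\le N$ forces $q_2\le\nu_2\bigl(1-(I_0/N)^{1/k_2}\bigr)<\nu_2$, so the conclusion stands.
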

\begin{remark}\label{remark:9}Using \eqref{eq2:7} and \eqref{eq2:11} it is straightforward to show that for the initial gamma distribution of susceptibility, the time-dependent distribution is also the gamma-distribution with parameters $k_1$ and $\nu_1-q_1(t)$, where $q_1(t)$ is the solution of the corresponding auxiliary equation \eqref{eq2:6}. That is, we have
$$
\bar{\beta}_1(t)=\mathsf{E}_t[\beta_1]=\frac{k_1}{\nu_1-q_1(t)}\,,\quad \sigma_1^2(t)=\frac{k_1}{(\nu_1-q_1(t))^2}\,.
$$
Note that for any time moment the coefficient of variation $cv=\sigma_1(t)/\bar{\beta_1}(t)=1/\sqrt{k_1}$ and does not depend on $t$.
\end{remark}

\begin{remark} To analytically analyze models with parameter distributions it is useful to have families of distributions for which their mgfs are known, as in the case of the gamma-distribution. A very general family of distributions is the so-called power variance function distributions (PVF distributions) \cite{aalen2008survival}, which are defined through mgf:
\begin{equation}\label{eq4:11}
    \mathsf{M}(\lambda)=\exp\left[-\rho\left\{1-\left(\frac{\nu}{\nu-\lambda}\right)^m\right\}\right],
\end{equation}
with $\nu>0,\,m>-1,\,m\rho>0$.

The mgf \eqref{eq4:11} describes a number of distributions. For instance, if $\rho\to\infty$ and $m\to 0$ in such a way that $\rho m\to k$, then \eqref{eq4:11} reduces to the mgf of the gamma-distribution \eqref{eq4:9}. When $m>0$ then \eqref{eq4:11} gives a compound Poisson distribution, which is the sum of independent gamma-distributions with the parameters $\nu$ and $m$, when the number of summand is Poisson distributed with expectation $\rho$. Such compound Poisson distribution has non-zero mass probability at zero, which means, in terms of susceptibility to a disease, that there is part of population of mass $\exp\{-\rho\}$ that is immune to the disease. When $m=-1/2$ and $\rho<0$ we have an inverse gaussian (Wald) distribution. Other limiting distributions are possible, which gives a wide choice of heterogeneity distributions \cite{aalen2008survival}.

From \eqref{eq4:11} it can be shown that
$$
\mathsf{E}[Z]=\frac{\rho m}{\nu}\,,\quad \mathsf{Var}[Z]=\frac{\rho m}{\nu}\frac{m+1}{\nu}\,.
$$
Equation \eqref{eq2:7} implies that the time dependent parameter distribution of the models with parametric heterogeneity is still PVF distribution if the initial one is given by \eqref{eq4:11}, with the following change in parameters:
$$
\rho\to\rho\left(\frac{\nu}{\nu-q(t)}\right)^m,\quad \nu\to\nu-q(t),\quad m\to m.
$$
Not all the distributions possess this ``stability'' property: e.g., the initial uniform distribution turns into truncated exponential distribution.
\end{remark}
\paragraph{On the final epidemic size.} A very important quantity of the SIR model \eqref{eq3:1} is the final epidemic size, which can be defined as the number of susceptible hosts that escape infection, and which we denote $S_{\infty}$. From \eqref{eq3:1} it follows that in the homogeneous case this number can be found as the root to the equation
\begin{equation}\label{eq4:12}
    S_\infty=S_0e^{{\beta(S_\infty-N)}/{\gamma}}.
\end{equation}
Consider again the heterogeneous SIR model \eqref{eq3:3} with distributed susceptibility. Theorem \ref{th:2} implies
\begin{corollary}The final epidemic size of the heterogeneous SIR model with distributed susceptibility \eqref{eq3:3} can be found as the root to the equation
\begin{equation}\label{eq4:13}
    S_\infty=S_0\mathsf{M}\bigl((S_\infty-N)/\gamma\bigr).
\end{equation}
\end{corollary}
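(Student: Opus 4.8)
The plan is to read \eqref{eq4:13} off the first integral produced in the proof of Theorem~\ref{th:2}, specialized to model \eqref{eq3:3}. Model \eqref{eq3:3} is the case of \eqref{si1} in which the infectivity is homogeneous --- the infectivity distribution is a point mass and $\bar\beta_2(t)\equiv 1$, so $q_2$ is irrelevant --- and the reduced system \eqref{eq4:4} then collapses to $\dot S=-\bar\beta_1(t)SI$, $\dot I=\bar\beta_1(t)SI-\gamma I$, $\dot q_1=-I$ with $q_1(0)=0$, together with the first integral $S/S_0=\mathsf M(q_1(t))$ supplied by \eqref{eq4:6}, where $\mathsf M$ is the mgf of the initial susceptibility distribution. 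Adjoining the removed compartment $\dot R=\gamma I$, $R(0)=0$ (so that $N=S_0+I_0$ and $S+I+R\equiv N$), one sees immediately that $\dot R=-\gamma\dot q_1$, hence $R(t)=-\gamma q_1(t)$ for all $t\ge 0$.

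The next step is the standard \emph{burn-out} argument for the planar non-autonomous system above, now adapted to the time-dependent coefficient $\bar\beta_1(t)$. Since $\frac{d}{dt}(S+I)=-\gamma I\le 0$ and $S+I\ge 0$, the limit $\lim_{t\to\infty}(S+I)$ exists, whence $\gamma\int_0^\infty I(\tau)\,d\tau=S_0+I_0-\lim_{t\to\infty}(S+I)<\infty$; in particular $q_1(\infty):=-\int_0^\infty I(\tau)\,d\tau$ is finite. Because $S$ is nonincreasing and nonnegative, $S_\infty:=\lim_{t\to\infty}S(t)$ exists, and then integrability of $I\ge 0$ together with convergence of $S+I$ forces $I(t)\to 0$; letting $t\to\infty$ in $S+I+R\equiv N$ gives $R(\infty)=N-S_\infty$. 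Combining with $R(\infty)=-\gamma q_1(\infty)$ yields $q_1(\infty)=(S_\infty-N)/\gamma$.

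Finally, passing to the limit $t\to\infty$ in $S(t)/S_0=\mathsf M(q_1(t))$ --- legitimate since $q_1(\infty)\le 0$ and the mgf of a nonnegative random variable is finite and continuous on $(-\infty,0]$ --- gives exactly $S_\infty/S_0=\mathsf M\bigl((S_\infty-N)/\gamma\bigr)$, i.e.\ \eqref{eq4:13}; note also $S_\infty=S_0\mathsf M(q_1(\infty))>0$, so a positive fraction of susceptibles always escapes. I would add that \eqref{eq4:13} has a unique root in $(0,S_0)$: with $g(S)=S_0\mathsf M\bigl((S-N)/\gamma\bigr)-S$ one has $g(0)=S_0\mathsf M(-N/\gamma)>0$ and $g(S_0)=S_0\bigl(\mathsf M(-I_0/\gamma)-1\bigr)<0$, while $g$ is strictly convex because $\mathsf M$ is, so $g$ changes sign exactly once on $(0,S_0)$.

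I do not anticipate a genuine obstacle; the only point requiring care is the limiting argument of the second paragraph --- guaranteeing that $q_1(t)$ converges to a finite value and that $I(t)\to 0$ --- which is the classical Kermack--McKendrick fact, essentially unchanged by the presence of the time-varying rate $\bar\beta_1(t)$. Everything else is substitution into the first integral provided by Theorem~\ref{th:2}. One could instead integrate $dI/dS=-1+\gamma/h_1(S)$ coming from \eqref{eq4:1} and evaluate $\int dS/h_1(S)$ by the substitution $S=S_0\mathsf M(v)$, which reproduces \eqref{eq4:13}, but this route is more computational, so I would prefer the argument via $q_1$.
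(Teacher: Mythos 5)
Your proof is correct and follows essentially the same route as the paper: both identify $q_1(t)=-R(t)/\gamma$ and read $S(t)/S_0=\mathsf M(-R(t)/\gamma)$ off the first integral from Theorem~\ref{th:2}, then let $t\to\infty$ with $R_\infty=N-S_\infty$. Your added justification of the limiting step (finiteness of $q_1(\infty)$, $I(t)\to 0$) and the uniqueness of the root in $(0,S_0)$ are details the paper leaves implicit, but the underlying argument is the same.
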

\begin{proof}
From the first equation in \eqref{eq4:4} and recalling the third equation in \eqref{eq3:1} we find that
$$
\frac{d}{dt}\ln S(t)=\frac{d}{dt}\ln \mathsf{M}(-R/\gamma),
$$
from which, after integration and using the fact $R_\infty=N-S_\infty$, \eqref{eq4:13} follows.
\end{proof}
\begin{remark} Equation \eqref{eq4:12} is a particular case of \eqref{eq4:13} if the initial distribution of susceptibility is the delta-function.
\end{remark}

\begin{remark} Equation \eqref{eq4:13} can also be obtained from the general epidemic equation \eqref{eq3:4}, which shows that no assumptions on the distribution of epidemic length is required to deduce \eqref{eq4:13}. A very careful mathematical analysis of this equation is given in \cite{katriel2011size}, and we refer the reader to this reference for many intricate details. We note that for the first time equation \eqref{eq4:13} was written in \cite{ball1985deterministic} for a discrete distribution of susceptibility. It can be easily proved that the final epidemic size equation implies that the epidemic is the most severe in case of a totally uniform population; heterogeneity in susceptibility increases the final epidemic size $S_\infty$ (see Fig. \ref{fig:1}).
\end{remark}

Consider a numerical example. Let us assume that we have two SIR model \eqref{eq3:3} with heterogeneous susceptibility, and the initial distributions are given by a gamma and Wald distributions respectively with the same initial mean values and the same variances. The final epidemic sizes are shown in Fig. \ref{fig:1} depending on the initial variances.
\begin{figure}[!tb]
\centering
\includegraphics[width=0.7\textwidth]{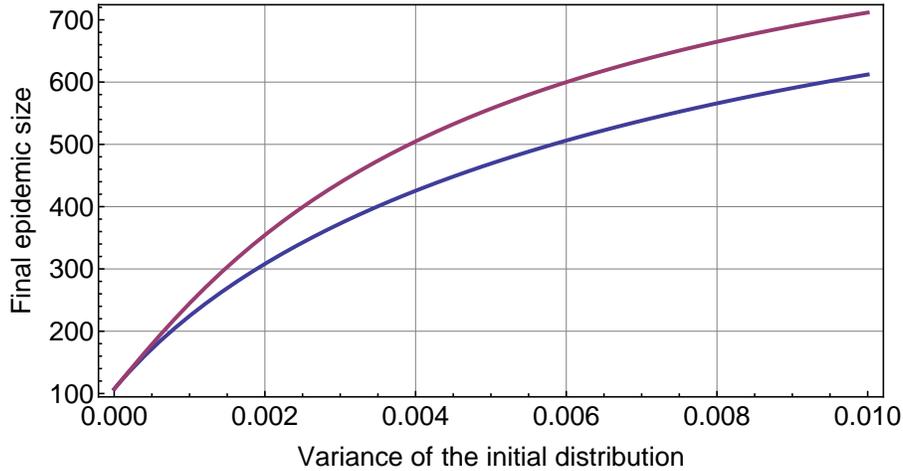}
\caption{Final epidemic sizes versus the initial variance of the susceptibility distributions for two initial distributions: a gamma distribution (the top curve) and a Wald distribution (the bottom curve). All other parameters are the same for both cases}\label{fig:1}
\end{figure}

Figure \ref{fig:1} allows to make an important conclusion: to infer the final epidemic size for an SIR model with distributed susceptibility it is not enough to know only several first moments of the initial distribution. This conclusion is somewhat restrictive for a predictive use of such models; however, it also signifies that various approximation techniques can lead to erroneous conclusions. As an example consider the final epidemic size equation for \eqref{eq3:6}, obtained in \cite{Dwyer1997} with two different methods. First it was supposed that the susceptibility distribution is a gamma-distribution. The second approach was to consider an infinite dimensional system of ODE, which can be inferred from \eqref{eq3:6}, for the moments of the corresponding distribution (this is a natural strategy for analyzing infinite dimensional dynamical systems of the form \eqref{eq2:2}, see \cite{Nikolaou2006,Veliov2005}). Inasmuch as it is impossible to solve a system with infinite number of equations, various techniques to close such systems exit. In particular, it was supposed in \cite{Dwyer1997} that the coefficient of variation is constant. This assumption led to the same result as was obtained for the initial gamma distribution. Therefore, it was concluded that the exact form of the initial distribution is irrelevant because two seemingly different approaches lead to the same outcome. However, theory from Section \ref{section:2} shows that the opposite is true. First, Remark \ref{remark:9} gives the explanation why two approaches in \cite{Dwyer1997} turned out to be equivalent. And second, Fig. \ref{fig:1} provides indirect proof that the final epidemic size depends on the exact form of the underlying susceptibility distribution (see also \cite{katriel2011size} for exact mathematical statements).

Unfortunately, for the general case of the SIR model \eqref{si1} with distributed susceptibility and infectivity we were not able to obtain a simple equation for the final epidemic size.

\paragraph{Numerical illustration.} In Fig. \ref{fig:2} numerical solutions for \eqref{eq3:3} are shown for different initial variances of the parameter distribution and equal means, which confirms that the heterogeneity in susceptibility decreases the severity an an epidemic.
\begin{figure}
\centering
\includegraphics[width=0.48\textwidth]{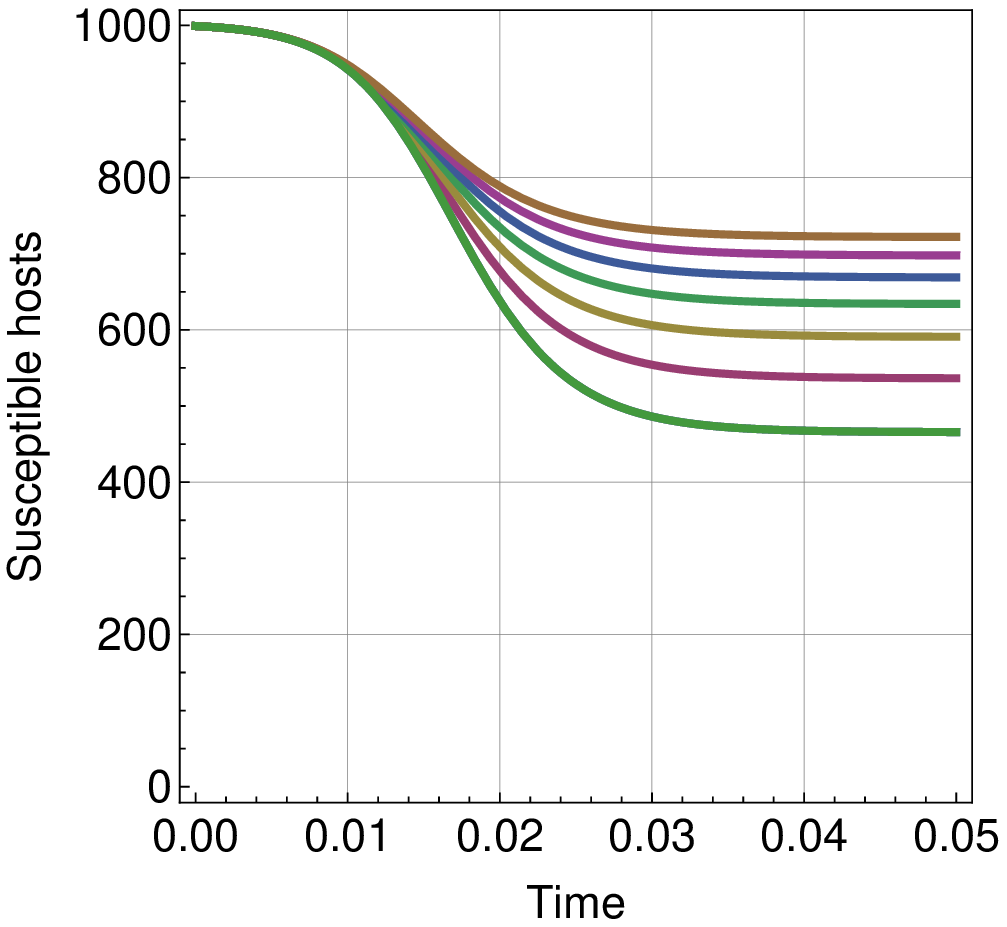}
\includegraphics[width=0.48\textwidth]{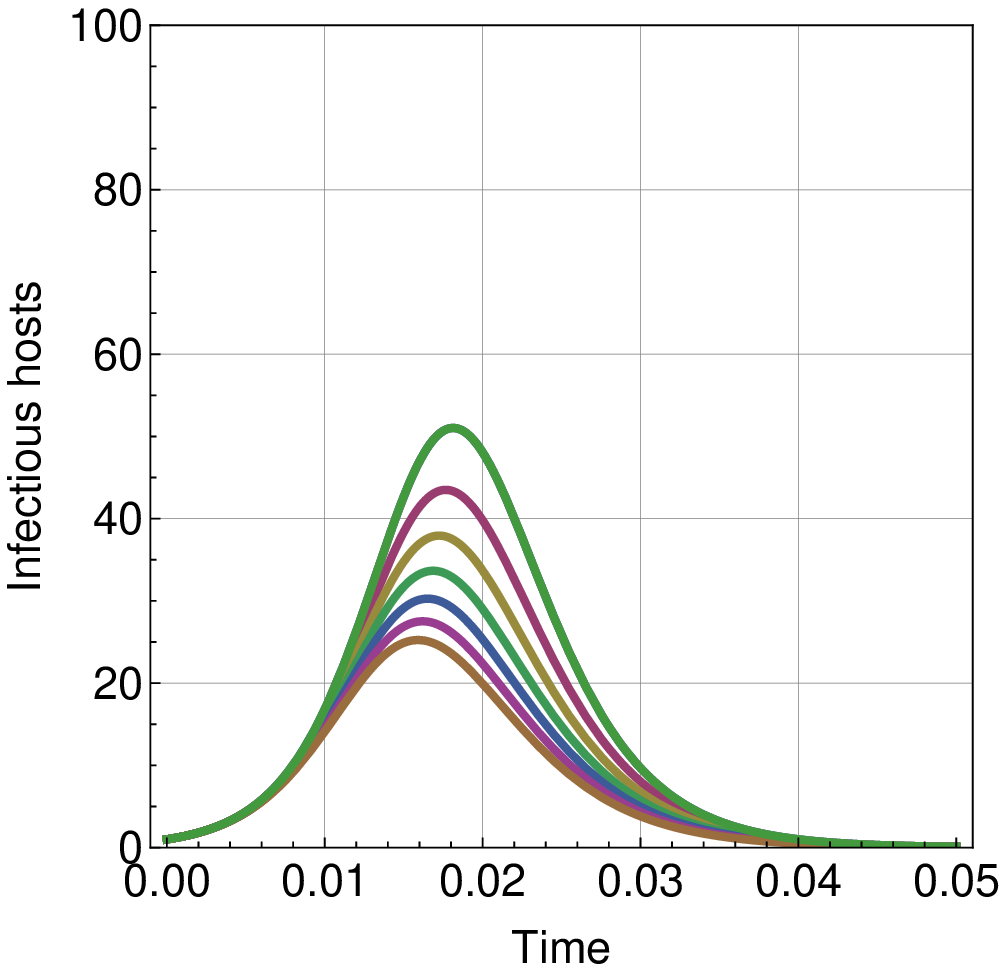}
\caption{Solutions to the heterogeneous SIR model \eqref{eq3:3} with distributed susceptibility. The parameters are $S_0=999,\,I_0=1,\,\gamma=700,\,\bar{\beta_1}(0)=1$ and the initial distribution of susceptibility is the gamma distribution with $\sigma^2_1(0)=0,0.2,0.4,0.6,0.8,1,1.2$ (from the bottom to top curves for the susceptible hosts, and the opposite direction for the infectious hosts)}\label{fig:2}
\end{figure}

In Fig. \ref{fig:3} numerical solutions for system \eqref{eq3:8} are shown for different initial variances of infectivity and equal means. This figure shows that the distributed infectivity has an opposite effect on the severity of an epidemic comparing with the distributed susceptibility: the more heterogeneous the initial distribution of infectivity is, the smaller the number of susceptible hosts that escape the infection.
\begin{figure}[!t]
\centering
\includegraphics[width=0.48\textwidth]{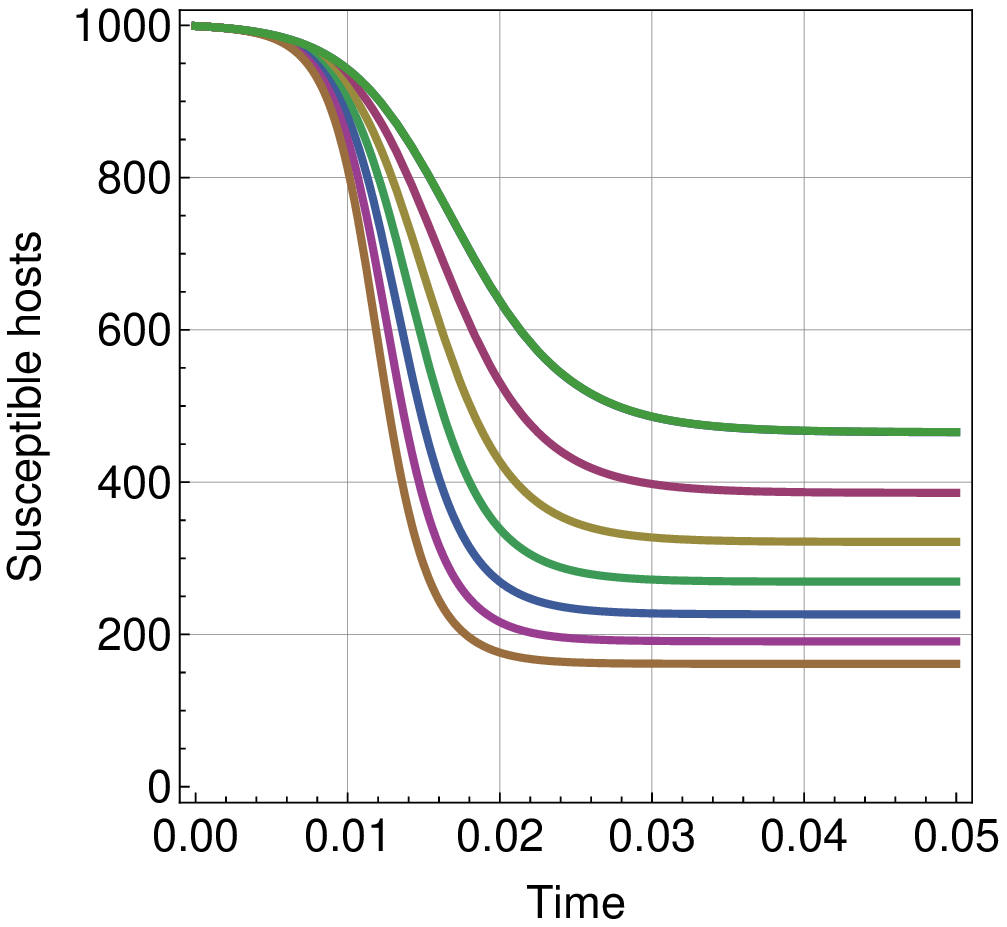}
\includegraphics[width=0.48\textwidth]{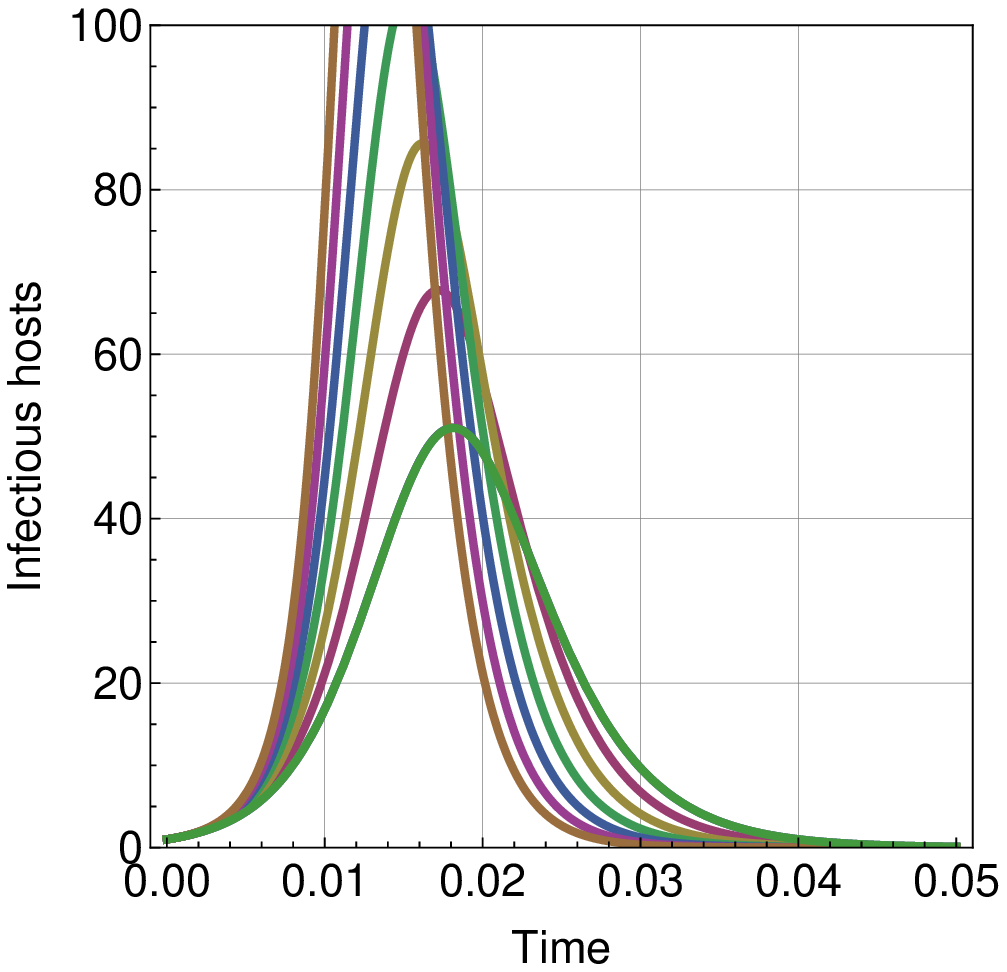}
\caption{Solutions to the heterogeneous SIR model \eqref{eq3:8} with distributed infectivity. The parameters are $S_0=999,\,I_0=1,\,\gamma=700,\,\bar{\beta_1}(0)=1$ and the initial distribution of infectivity is the gamma distribution with $\sigma^2_2(0)=0,0.005,0.01,0.015,0.02,0.025,0.03$ (from the top to bottom curves for the susceptible hosts, and the opposite direction for the infectious hosts)}\label{fig:3}
\end{figure}

The last conclusion can be cast in an exact mathematical statement if one is interested only in the initial period of the epidemic.
\begin{corollary}\label{corr:14}
Let $S_1(t),\,S_2(t)$ be the solutions to \eqref{eq3:3} with the initial conditions that satisfy $\sigma_1^2(0)>\sigma_2^2(0)$ for the distribution of susceptibility, all other initial conditions being equal. Then there exists $\varepsilon>0$ such that $S_1(t)>S_2(t)$ for all $t\in(0,\varepsilon)$.

Let $S_1(t),\,S_2(t)$ be the solutions to \eqref{eq3:8} with the initial conditions that satisfy $\sigma_1^2(0)>\sigma_2^2(0)$ for the distribution of infectivity, all other initial conditions being equal. Then there exists $\varepsilon>0$ such that $S_1(t)<S_2(t)$ for all $t\in(0,\varepsilon)$.
\end{corollary}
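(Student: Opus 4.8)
\emph{Proof proposal.} The plan is to pass to the finite-dimensional reductions guaranteed by Theorem~\ref{th:2} and then compare the Taylor expansions of $S_1(t)$ and $S_2(t)$ at $t=0$. For both models the susceptible total satisfies an equation of the form
\[
\dot S=-\bar\beta(t)\,S\,I,\qquad \dot I=\bar\beta(t)\,S\,I-\gamma I,
\]
where $\bar\beta(t)$ denotes the current mean of the single distributed parameter (susceptibility in \eqref{eq3:3}, infectivity in \eqref{eq3:8}); this is exactly \eqref{eq4:4} after the constant mean of the homogeneous trait is absorbed into $\beta$. The whole argument rests on the fact that the \emph{evolution} of $\bar\beta$ differs between the two cases: by \eqref{eq2:12} one has $\dot{\bar\beta}(t)=g(\V v)\,\sigma^2(t)$, and $g(\V v)=-I(t)$ for model~\eqref{eq3:3} (the susceptible per capita rate is $-\beta(\omega)I$) while $g(\V v)=S(t)$ for model~\eqref{eq3:8} (the infectious per capita rate is $\beta(\omega)S-\gamma$). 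Since the right-hand sides of the reduced systems are smooth where the relevant mgf is, $S_j\in C^\infty$ near $t=0$, so a second-order Taylor expansion is legitimate.

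First I would record that the zeroth- and first-order coefficients agree for the two solutions: $S_j(0)=S_0$ and $\dot S_j(0)=-\bar\beta(0)S_0I_0$ and $\dot I_j(0)=\bar\beta(0)S_0I_0-\gamma I_0$ are all determined by the common data $S_0,I_0,\gamma$ and by $\bar\beta(0)$, which is the same for both because the two initial distributions are assumed to share their mean. Differentiating $\dot S=-\bar\beta SI$ once more and evaluating at $t=0$ gives
\[
\ddot S_j(0)=-\dot{\bar\beta}_j(0)\,S_0I_0-\bar\beta(0)\,\dot S_j(0)\,I_0-\bar\beta(0)\,S_0\,\dot I_j(0),
\]
and here only the first summand carries the index $j$, so $\ddot S_1(0)-\ddot S_2(0)=-\bigl(\dot{\bar\beta}_1(0)-\dot{\bar\beta}_2(0)\bigr)S_0I_0$.

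Then I would substitute the two expressions for $\dot{\bar\beta}(0)$. For \eqref{eq3:3}, $\dot{\bar\beta}_j(0)=-I_0\sigma_j^2(0)$, hence $\ddot S_1(0)-\ddot S_2(0)=I_0^2S_0\bigl(\sigma_1^2(0)-\sigma_2^2(0)\bigr)>0$; for \eqref{eq3:8}, $\dot{\bar\beta}_j(0)=S_0\sigma_j^2(0)$, hence $\ddot S_1(0)-\ddot S_2(0)=-S_0^2I_0\bigl(\sigma_1^2(0)-\sigma_2^2(0)\bigr)<0$. Because the zeroth- and first-order terms cancel in $S_1(t)-S_2(t)$, its leading nonvanishing term is $\tfrac12\bigl(\ddot S_1(0)-\ddot S_2(0)\bigr)t^2+o(t^2)$, whose sign is that of $\ddot S_1(0)-\ddot S_2(0)$; thus there is $\varepsilon>0$ with $S_1(t)>S_2(t)$ on $(0,\varepsilon)$ in the first case and $S_1(t)<S_2(t)$ on $(0,\varepsilon)$ in the second.

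I do not expect a genuine obstacle here; the one place that needs care is the bookkeeping in the formula for $\ddot S_j(0)$ — verifying that every term other than the one involving $\dot{\bar\beta}_j(0)$ is truly independent of $j$, which is precisely where the hypotheses ``same mean'' (so that $\bar\beta(0)$, and therefore $\dot S_j(0)$ and $\dot I_j(0)$, are common) and ``all other initial data equal'' are used. (If one wished to avoid invoking \eqref{eq2:12}, the identity $\dot{\bar\beta}_j(0)=g(\V v)\,\sigma_j^2(0)$ can instead be read off from \eqref{eq4:6}: differentiating $S/S_0=\mathsf{M}_1(0,q_1)$ twice, or using $\bar\beta=\partial_\lambda\ln\mathsf{M}$ together with $\dot q=g(\V v)$ and $\partial_\lambda^2\ln\mathsf{M}=\sigma^2$.)
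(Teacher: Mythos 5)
Your proposal is correct and follows essentially the same route as the paper: differentiate $\dot S=-\bar\beta SI$ once more, use \eqref{eq2:12} with $g=-I$ for \eqref{eq3:3} and $g=S$ for \eqref{eq3:8} to evaluate $\dot{\bar\beta}(0)=\mp I_0\sigma^2(0)$ or $S_0\sigma^2(0)$, and compare second derivatives at $t=0$. Your version is in fact slightly more careful than the paper's (which states the result for $S''$ tersely, with an apparent typo $\bar\beta$ for $\bar\beta^2$), since you explicitly verify that the zeroth- and first-order Taylor coefficients coincide before concluding from the sign of $\ddot S_1(0)-\ddot S_2(0)$.
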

\begin{proof} Using \eqref{eq2:12} and differentiating the first equation in \eqref{eq3:3} we obtain
$$
S''(t)=I^2S(\sigma^2(t)+\bar{\beta}(t))-\bar{\beta}(t)I'S,
$$
or, at the initial time moment, $S_1''(0)>S''_2(0)$, which proves the first part the corollary. The second part is proved in a similar way.
\end{proof}

Finally, for the general heterogeneous SIR model \eqref{si1} with distributed infectivity and susceptibility some numerical results are shown in Fig. \ref{fig:4}, where the initial distributions of susceptibility and infectivity are combined from those in Figs. \ref{fig:2} and \ref{fig:3}. The conclusion from numerical calculations is that the interaction of the distributions of susceptibility and infectivity is nonlinear and cannot be predicted from knowing the final outcomes of separate epidemics  (see also \cite{andreasen2011final} for the discussion of the final epidemic size of the model \eqref{si1}).
\begin{figure}[!t]
\centering
\includegraphics[width=0.48\textwidth]{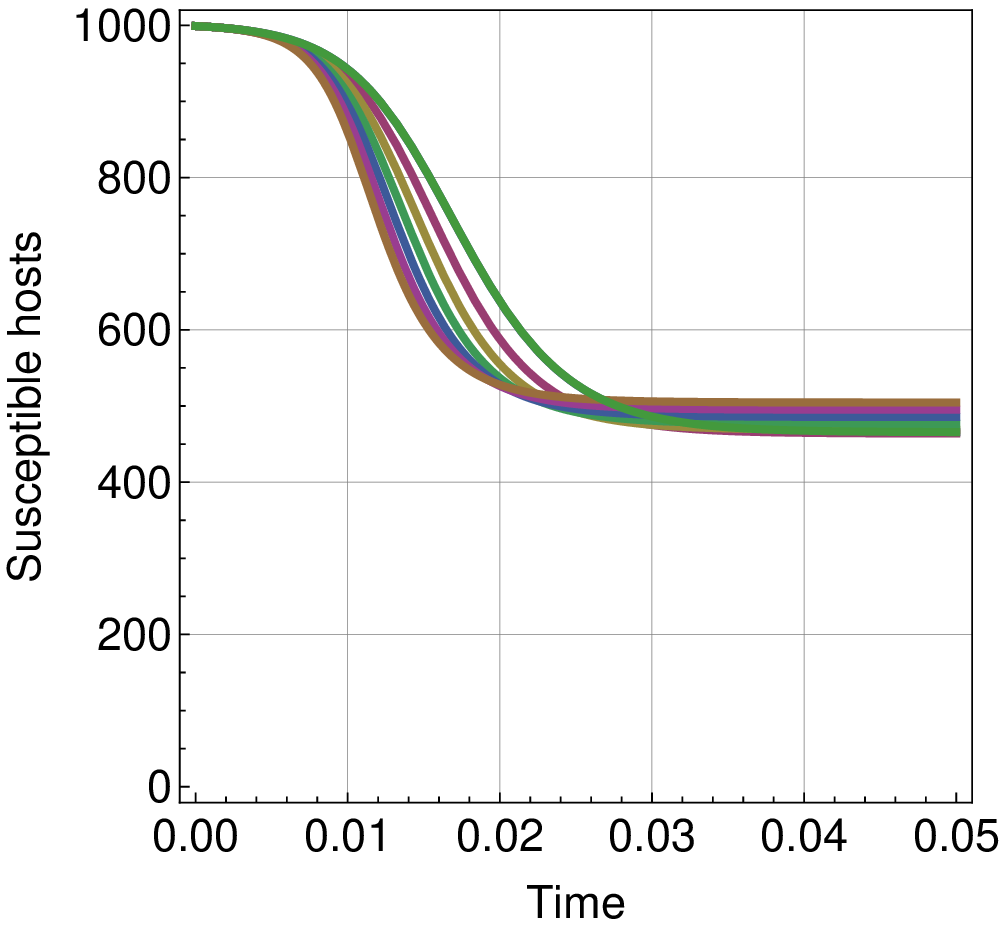}
\includegraphics[width=0.48\textwidth]{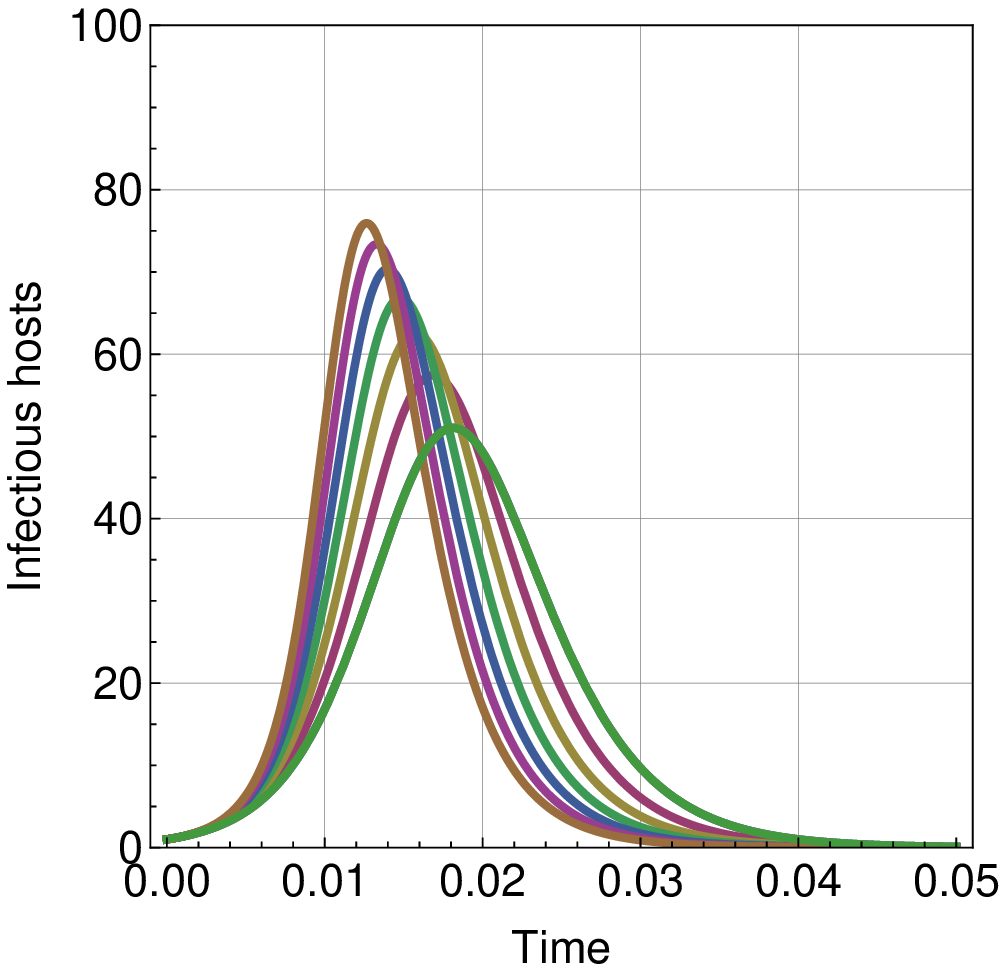}
\caption{Solutions to the heterogeneous SIR model \eqref{si1} with distributed infectivity and susceptibility . The parameters are $S_0=999,\,I_0=1,\,\gamma=700,\,\bar{\beta_1}(0)=1$ and the initial distribution of susceptibility is the gamma distribution with $\sigma^2_1(0)=0,0.2,0.4,0.6,0.8,1,1.2$, the initial distribution of infectivity is the gamma distribution with $\sigma^2_2(0)=0,0.005,0.01,0.015,0.02,0.025,0.03$}\label{fig:4}
\end{figure}

\subsection{Heterogeneous SIR model with distributed contact number}
Here we start with system \eqref{eq3a:4}:
\begin{equation}\label{eq4:14}
\begin{split}
\pdt s(t,w)&=-r\omega s(t,\omega)\mu(t),\\
\pdt i(t,w)&=-r\omega s(t,\omega)\mu(t)-\gamma i(t,\omega)\,,
\end{split}
\end{equation}
where
$$
\mu(t)=\frac{\int_{\Omega}\omega i(t,\omega)d\omega}{\int_{\Omega}\omega s(t,\omega)\,d\omega+\int_{\Omega}\omega i(t,\omega)\,d\omega}\,.
$$
First we note that system \eqref{eq4:14} is not covered by Theorem \ref{th2:1}, therefore, at least using the framework from Section \ref{section:2}, we cannot reduce this system to an ODE system. However, some results are still possible to obtain. In particular, we give a simple derivation of the expression for the initial growth of $\mu(t)$ (cf. \cite{May1988}).

If we denote $K(t)=\int_\Omega \omega i(t,\omega)$, then
$$
K'(t)=r\mathsf{E}[\omega^2]S\mu(t)-\gamma K(t).
$$
Differentiating $\mu(t)$ yields
\begin{equation}\label{eq4:15}
\begin{split}
  \mu'(t)&=\frac{K'(t)}{\int_{\Omega}\omega s(t,\omega)\,d\omega+\int_{\Omega}\omega i(t,\omega)\,d\omega}+\gamma \mu^2(t)=\\
  &=\frac{r\mathsf{E}[\omega^2]S\mu(t)}{\int_{\Omega}\omega s(t,\omega)\,d\omega+\int_{\Omega}\omega i(t,\omega)\,d\omega}-\gamma \mu(t)+\gamma \mu^2(t)=\\
  &=\mu(t)\left(\frac{r \mathsf{E}[\omega^2]S}{\mathsf{E}[\omega]S+K(t)}-\gamma\right)+\gamma \mu^2(t)=\\
  &=\mu(t)\left(\frac{r \mathsf{E}[\omega^2]}{\mathsf{E}[\omega]}-\gamma\right)+\gamma \mu^2(t),
\end{split}
\end{equation}
where the last equality holds for $t\to 0$ because $S(t)\to N$ and $K(t)\to 0$. In words, we obtained the well known results \cite{May1988} that initially the change in the number of susceptible hosts is proportional not to the mean number of contacts but to the mean number plus the coefficient of variation. This shows that the individuals who have a high number of contacts contribute disproportionally to the spread of an epidemic.

Corollary \ref{corr:14} implies that in case of distributed susceptibility or infectivity the initial phase of an epidemic can be determined by the first two moments. Here we prove that this is not so for model \eqref{eq4:14}, as it was shown in \cite{Veliov2005}. The proof uses Theorem \ref{th2:1}, direct proof is given in \cite{Veliov2005}. We use the notation $L(t)=\int_\Omega \omega s(t,\omega)\,d\omega$.

\begin{lemma}Let $S_1(t), \, S_2(t)$ be the solutions to model \eqref{eq4:14} with the initial conditions such that $\sigma_1^2(t)>\sigma_2^2(t)$ for the distribution of the contact number, all other initial conditions being equal. If $L(0)<K(0)$ then there exists $\varepsilon>0$ such that $S_1(t)>S_2(t)$ for all $t\in(0,\varepsilon)$. If $L(0)>K(0)$ then the opposite holds.
\end{lemma}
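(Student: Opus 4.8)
The plan is to establish the ordering of $S_1(t)$ and $S_2(t)$ for small $t>0$ by a second-order Taylor expansion at $t=0$, exactly in the spirit of the proof of Corollary \ref{corr:14}. Throughout write $L(t)=\int_\Omega\omega s(t,\omega)\,d\omega$, $K(t)=\int_\Omega\omega i(t,\omega)\,d\omega$ and $M(t)=\int_\Omega\omega^2 s(t,\omega)\,d\omega$, so that $\mu(t)=K(t)/(L(t)+K(t))$ and, integrating the first equation of \eqref{eq4:14} over $\Omega$, $\dot S=-r\mu L$. First I would normalise the data: the two initial susceptible densities have the same mass $S_0$ and the same mean while $\sigma_1^2(0)>\sigma_2^2(0)$, and the infectious densities (hence $K(0)$), $r$ and $\gamma$ coincide. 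Then $S_1(0)=S_2(0)$, and since $L(0)$ and $K(0)$ are common to both systems, $\dot S_1(0)=\dot S_2(0)=-r\mu(0)L(0)$. Thus Taylor's theorem gives $S_1(t)-S_2(t)=\frac12\bigl(\ddot S_1(0)-\ddot S_2(0)\bigr)t^2+o(t^2)$, and everything reduces to comparing the two second derivatives at $t=0$.

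To obtain a workable expression for $\ddot S(0)$ I would invoke Theorem \ref{th2:1} for the susceptible equation of \eqref{eq4:14} in isolation: it has the form \eqref{eq2:1}--\eqref{eq2:2} with $\varphi(\omega)=\omega$, $f\equiv 0$ and $g(\V v)=-r\mu$, where (taking $\varphi(\omega)=\omega$ also on the infectious side) $\mu$ is a function of $L=N_1\bar\varphi$ and $K=N_2\bar\varphi$; it is only the infectious equation that escapes the framework, which is why the full system does. Hence, as in \eqref{eq2:8}--\eqref{eq2:9}, $s(t,\omega)=s(0,\omega)e^{\omega q(t)}$ with $\dot q=-r\mu$, $q(0)=0$, so $S(t)=S_0\mathsf{M}(q(t))$, $L(t)=S_0\mathsf{M}'(q(t))$ and $M(t)=S_0\mathsf{M}''(q(t))$, $\mathsf{M}$ being the mgf of the initial contact distribution among susceptibles. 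Differentiating $S(t)=S_0\mathsf{M}(q(t))$ twice and using $\mathsf{M}'(0)=\bar\omega$, $\mathsf{M}''(0)=\sigma^2(0)+\bar\omega^2$ yields $\ddot S(0)=r^2\mu(0)^2M(0)-rL(0)\,\dot\mu(0)$; equivalently one may differentiate $\dot S=-r\mu S\bar\varphi$ once more and substitute $\dot{\bar\varphi}=-r\mu\sigma^2$ from \eqref{eq2:12}, paralleling Corollary \ref{corr:14}.

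The last step is to feed in $\dot\mu(0)$ from the computation \eqref{eq4:15}, $\dot\mu=\mu\bigl(\frac{rM}{L+K}-\gamma+\gamma\mu\bigr)$, and simplify at $t=0$ using $\mu(0)=K(0)/(L(0)+K(0))$; one finds
\[
\ddot S(0)=\frac{r\mu(0)}{L(0)+K(0)}\Bigl(rM(0)\bigl(K(0)-L(0)\bigr)+\gamma L(0)^2\Bigr).
\]
Since the two systems differ only through $M(0)$, with $M_1(0)-M_2(0)=S_0\bigl(\sigma_1^2(0)-\sigma_2^2(0)\bigr)>0$, the $\gamma L(0)^2$ term drops out of the difference and
\[
\ddot S_1(0)-\ddot S_2(0)=\frac{r^2\mu(0)\bigl(K(0)-L(0)\bigr)}{L(0)+K(0)}\,S_0\bigl(\sigma_1^2(0)-\sigma_2^2(0)\bigr),
\]
whose sign is that of $K(0)-L(0)$ (here $\mu(0)>0$ because some infection is present initially). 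Together with the expansion from the first paragraph this gives $S_1(t)>S_2(t)$ on some $(0,\varepsilon)$ when $L(0)<K(0)$, and $S_1(t)<S_2(t)$ when $L(0)>K(0)$.

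I expect the only genuine difficulty to be the algebraic bookkeeping in the last paragraph: one must keep scrupulous track of which quantities ($\mu(0)$, $L(0)$, $K(0)$, the common mean, $r$, $\gamma$) are shared by the two systems, and verify that the dependence on $\sigma^2(0)$ collapses into the single monomial $rM(0)\bigl(K(0)-L(0)\bigr)$ so that everything else cancels in $\ddot S_1(0)-\ddot S_2(0)$. The computation also makes transparent why the borderline $L(0)=K(0)$ must be excluded: there the second derivatives coincide and the ordering near $t=0$ is decided by the third-order term, which is not controlled by $\sigma^2(0)$ alone.
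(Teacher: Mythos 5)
Your argument is correct and follows essentially the same route as the paper: a second-order comparison of $S_1$ and $S_2$ at $t=0$, computing $\ddot S(0)$ from $\dot S=-r\mu L$ via the variance equation \eqref{eq2:12} (equivalently the mgf representation $S=S_0\mathsf{M}(q(t))$) together with the expression \eqref{eq4:15} for $\dot\mu$, so that the sign of $\ddot S_1(0)-\ddot S_2(0)$ is that of $K(0)-L(0)$. The paper merely tracks only the $\sigma^2$-dependent terms symbolically instead of writing out the full expression for $\ddot S(0)$, but the content is identical.
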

\begin{proof}From the first equation in \eqref{eq4:14} it follows
\begin{equation}\label{eq4:16}
\begin{split}
S''&=-r \mathsf{E}'[\omega]S\mu(t)-\mu(t)\mathsf{E}[\omega]S'-\mathsf{E}[\omega]S\mu'(t)=\mbox{(from \eqref{eq2:12} and \eqref{eq4:15})}\\
&=r\sigma^2(t) \mu^2(t)S-\mu(t)\mathsf{E}[\omega]S\left[\frac{r \mathsf{E}[\omega^2]S}{L(t)+K(t)}-\gamma\right]+\ldots=\\
&=\sigma^2(t)\mu^2(t)S\left[1-\frac{L(t)}{K(t)}\right]+\ldots,
\end{split}
\end{equation}
where dots denote terms that depend only on the first moment of the contact distribution.

From \eqref{eq4:16} lemma follows.
\end{proof}

As it was mentioned, Theorem \ref{th:2} cannot be applied to \eqref{eq4:14}. However, if we consider the case $\gamma=0$, we still can reduce the heterogeneous model to ODE. We rewrite equation \eqref{eq3a:5} in the form
\begin{equation}\label{eq5:1}
\pdt s(t,w)=-r\omega
s(t,\omega)\left[1-\frac{\bar{\omega}(t)S(t)}{C}\right],
\end{equation}
where $C$ is the number of contacts, which are made by the total
population, $\bar{\omega}(t)$ is the average number of contacts made
by one susceptible individual at time $t$. Theorem \ref{th:2} implies that \eqref{eq5:1} is equivalent to the following
ODE:
\begin{equation}\label{eq5:2}
\dot S(t)=-rh(S)\left[1-\frac{h(S)}{C}\right],
\end{equation}
where $h(S)$ is given by \eqref{eq4:2}. For example, if the initial distribution of the number of contacts of susceptible hosts is a gamma distribution with parameters $k$ and $\nu$, equation \eqref{eq5:2} takes the form
\begin{equation}\label{eq5:3}
\dot S(t)=-r\frac{k}{\nu}\left[\frac{S}{S_0}\right]^{1/k}S\left(1-\frac{k}{\nu}\left[\frac{S}{S_0}\right]^{1/k}\frac{S}{C}\right).
\end{equation}

Numerical solutions of \eqref{eq5:2} for two parameter sets are given in Figs. \ref{fig:5} and \ref{fig:6}. These figures show that initially the heterogeneous contact rates increase the speed of an epidemic; in the long term run, however, the presence of individuals who make a small number of contacts, slows the epidemic down.
\begin{figure}
\centering
\includegraphics[width=0.7\textwidth]{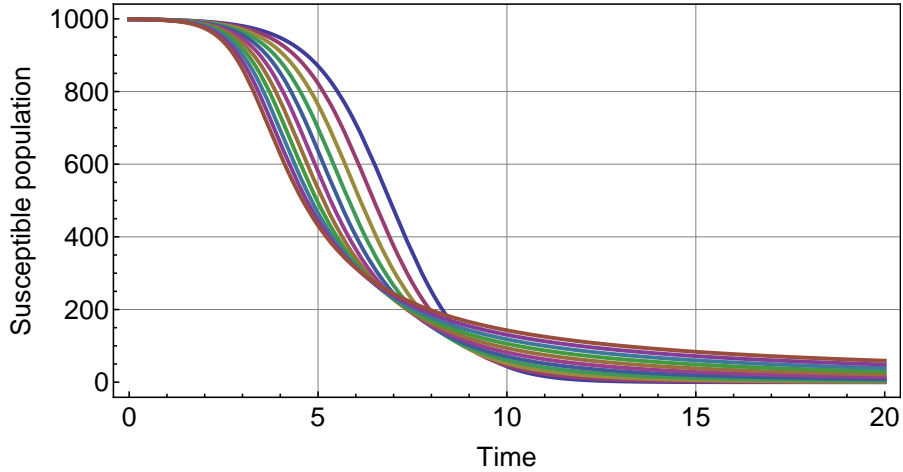}
\caption{Solutions to the heterogeneous SI model \eqref{eq5:3} with distributed contact number. The parameters are $S_0=999,\,I_0=1,\bar{\omega}(0)=1$ and the initial distribution of contact number is the gamma distribution with $\sigma^2(0)=0,0.1,0.2,0.3,0.4,0.5,0.6,0.7,0.8,0.9,1$}\label{fig:5}
\end{figure}
\begin{figure}
\centering
\includegraphics[width=0.7\textwidth]{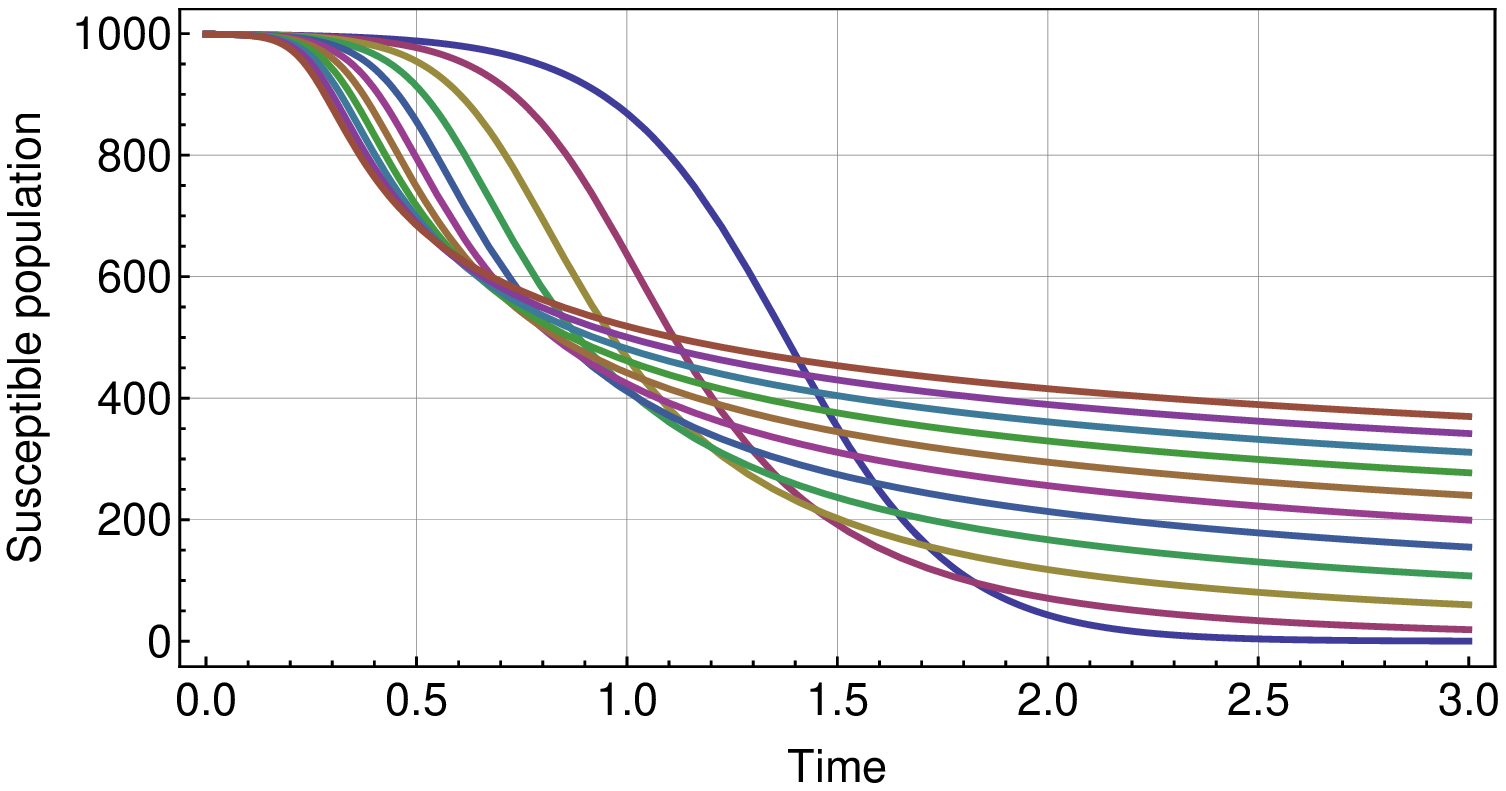}
\caption{Solutions to the heterogeneous SI model \eqref{eq5:3} with distributed contact number. The parameters are $S_0=999,\,I_0=1,\bar{\omega}(0)=5$ and the initial distribution of contact number is the gamma distribution with $\sigma^2(0)=0,10,20,30,40,50,60,70,80,90,100$}\label{fig:6}
\end{figure}

\section{Concluding comments}\label{conclusion} In this manuscript we reviewed and generalized several known results on the spread of epidemics in a closed heterogeneous population obtained with the help of the theory of heterogeneous populations with parametric heterogeneity. Our discussion was mainly about the SIR model \eqref{si1} and its particular cases \eqref{eq3:3}, \eqref{eq3:8}, and \eqref{si2}. We also introduced the methods of the theory of heterogeneous populations to the models that take into account non-uniform contact structure of the population, noting that explicit results can be inferred only for a very limited case of the SI model \eqref{eq5:1}.

While speaking of heterogeneity of the populations experiencing a disease, three different sources of heterogeneity can be accounted for. First, this is the heterogeneity in disease parameters that we termed the parametric heterogeneity, and we discussed this at length in the main text. Secondly, this is the heterogeneity of the social structure, which results in a different contact rates for different individuals. Finally, there is a third source of heterogeneity: the distribution of the infection length. For most models in the text it was tacitly assumed that this distribution is exponential with the mean $1/\gamma$ for all the infectious individuals. This assumption is usually made to simplify the mathematics and does not correspond to real situations. Using the approach from Section \ref{section:2} it is possible to slightly generalize our models by assuming that parameter $\gamma$ is also distributed through the population. Proceeding along the lines of Section \ref{section:3}, we obtain
\begin{equation}\label{eq6:1}
    \begin{split}
      \dot{S} &= -\beta SI, \\
       \pdt i(t,\omega) & =\beta S i(t,\omega)-\gamma(\omega)i(t,\omega),
    \end{split}
\end{equation}
which is equivalent, according to Theorem \ref{th2:1}, to the system
\begin{equation}\label{eq6:2}
    \begin{split}
      \dot{S} &= -\beta SI, \\
      \dot{I}& =\beta S I-\partial_\lambda \ln \mathsf{M}(\lambda)|_{\lambda=-t}I,
    \end{split}
\end{equation}
 with the mgf $\mathsf{M}(\lambda)$ of the initial distribution of $\gamma$. Systems \eqref{eq6:1} and \eqref{eq6:2} assume that the population consists of subpopulations, each of which has an exponentially distributed infection length, but this length varies from group to group according to the given initial distribution. A somewhat more interesting approach is to consider instead of \eqref{eq6:1} the following system
\begin{equation}\label{eq6:3}
    \begin{split}
      \dot{S} &= -\bar{\beta}(t) SI, \\
       \pdt i(t,\omega) & =\beta(\omega) S i(t,\omega)-\gamma(\omega)i(t,\omega),
    \end{split}
\end{equation}
where $\beta(\omega)$ and $\gamma(\omega)$ are correlated. Model \eqref{eq6:3} is not covered by Theorem \ref{th2:1}, but still can be tackled with the general theory from \cite{karev2009}.

In the mathematical models considered in the text it was always assumed that the population size is closed. This assumption allowed to formulate the models in the form suitable for Theorem \ref{th2:1}. It is a natural extension to consider models, when the demography and immigration processes are taken into account. This is very important because heterogeneous susceptibility of many diseases can be explained by a heritable genetic basis.  For example, we can consider the simplest SIR model with distributed susceptibility and recruitment in the form
\begin{equation}\label{eq6:4}
    \begin{split}
      \pdt s(t,\omega) &= \Lambda s(t,\omega)-\beta(\omega)s(t,\omega)I, \\
       \dot{I} & =\bar{\beta}(t)SI-\gamma I.
    \end{split}
\end{equation}
Model \eqref{eq6:4} can be reduced to an equivalent ODE system using Theorem \ref{th2:1}. However, the long-term behavior of \eqref{eq6:4} is straightforward: the individuals with higher values of the susceptibility parameter will outcompete those with smaller ones. For the model to be realistic it is also required to include the stochastic hereditary element --- mutations. That is, a more realistic counterpart of \eqref{eq6:4} is probably
\begin{equation}\label{eq6:5}
    \begin{split}
      \pdt s(t,\omega) &= \Lambda \int_\Omega \varphi(\omega,\eta) s(t,\eta)\,d\eta-\beta(\omega)s(t,\omega)I, \\
       \dot{I} & =\bar{\beta}(t)SI-\gamma I,
    \end{split}
\end{equation}
where $\varphi(\omega,\eta)$ gives the probability that a parent with the parameter value $\eta$ produces offspring with parameter value $\omega$. Model \eqref{eq6:5}, however, is not covered by Theorem \ref{th2:1}. We also note that a very similar to \eqref{eq6:4} model was considered in \cite{duffy2007rapid}, where, to solve the system, it was conjectured that the equation for the mean parameter value is (cf. \eqref{eq2:12})
$$
\dot{\bar{\beta}}(t)=-I\sigma^2(0).
$$
This means that is was implicitly assumed that the susceptibility distribution is normal, because the normal distribution is the one that satisfies the condition $\sigma^{2}(t)=\sigma^2(0)$.

Finally, all the models we discussed so far are deterministic. There is long history of studying stochastic SIR epidemics with multiple classes of susceptible and infectious individuals, e.g., \cite{andersson2000stochastic,ball1985deterministic,ball1993final,scalia1986asymptotic}, which is usually centered around the asymptotic distributions of the final epidemic size. We announce here that some of the methods presented in this text can be used for studying stochastic models \cite{novozhilov2012}.

\paragraph{Acknowledgements} The research is supported in part by the Russian Foundation for Basic Re-
search grant \# 10-01-00374. The author supported by the grant to young researches from Moscow State
University of Railway Engineering.

%\bibliography{infections}

%\begin{thebibliography}{10}
%\end{thebibliography}

\end{document}